\def\disp{\displaystyle}
\def\crr{\cr\noalign{\vskip2mm}}
\def\dref#1{(\ref{#1})}
\theoremstyle{plain}
\newtheorem{theorem}{Theorem}[section]
\newtheorem{lemma}{Lemma}[section]
\newtheorem{proposition}{Proposition}[section]
\newtheorem{corollary}{Corollary}[section]
\numberwithin{equation}{section}
\theoremstyle{definition}
\newtheorem{definition}{Definition}
\newtheorem{remark}{Remark}[section]
\newcommand{\R}{{\mathbb R}}
\def\A{\mathcal{A}}
\begin{document}
\title{{\bf
Dynamics Compensation in  Observation of Abstract   Linear Systems\footnote{\small
This work is supported by  the National Natural Science
Foundation of China  (Nos. 61873153,  61873260). }} }

\author{ Hongyinping Feng$^{a}$\footnote{\small  Corresponding author.
Email: fhyp@sxu.edu.cn.}, \  Xiao-Hui Wu$^{a}$\ and \ Bao-Zhu Guo$^{b,c}$\\
$^a${\it \small School of Mathematical Sciences,}
{\it \small Shanxi University,  Taiyuan, Shanxi, 030006,
China}\\
$^b${\it \small Department of Mathematics and Physics,  }
{\it \small  North
China Electric Power University, Beijing 102206, China}\\
$^c${\it \small Key Laboratory of System and Control, Academy of Mathematics and Systems Science,}
\\{\it \small  Academia Sinica, Beijing, China}
 }

 \maketitle

\begin{abstract}
This is the second part of four  series papers, aiming at the problem of
sensor dynamics compensation for abstract linear systems. Two major issues are addressed. The first
 one is about  the sensor dynamics compensation in    system observation
 and the second one
  is on the disturbance dynamics compensation in   output regulation  for linear system.
Both of them can be described   by the problem of state  observation  for  an abstract  cascade system.
We consider these two  apparently different problems  from the same abstract linear system point of view.
A new scheme of the observer design for the  abstract  cascade system
  is developed and the exponential convergence of the  observation error is established.
It is shown that the error based  observer design in the problem of   output regulation can be converted into a  sensor dynamics compensation problem
 by the well known   regulator  equations.
As a  result, a tracking error based   observer for output regulation problem  is designed by exploiting the
  developed method.
 As   applications,
  the ordinary differential  equations (ODEs)
with output time-delay and an unstable heat equation with ODE  sensor dynamics are fully investigated
to validate the   theoretical results.
 The numerical simulations for the unstable heat  system are carried out to validate the
 proposed method visually.

\vspace{0.3cm}

\noindent {\bf Keywords:}~ Cascade system, observer,     Sylvester equation, sensor dynamics, output regulation.

\vspace{0.3cm}

\end{abstract}
\section{Introduction}

When a sensor is installed on  a control plant  indirectly,
the dynamics  that connect   the control plant  and the sensor are referred to as  sensor dynamics.
In this case, one has to compensate the  sensor dynamics in the observer  design.
The compensation of   sensor dynamics  dominated by partial differential equations (PDEs) has   attracted much attention in recent years.
  The most commonly used   example for infinite-dimensional sensor dynamics compensation  is the  output time-delay compensation in the context of the backstepping method (\cite{SmyshlyaevKrstic2008SCL}),    where the time-delay is regarded as the  sensor dynamics dominated by a transport equation.
  Soon after   \cite{SmyshlyaevKrstic2008SCL}, the  sensor delay dynamics have been extended to the dynamics dominated by the first order hyperbolic equation     \cite{Ferrantea2020Aut},  heat equation \cite{Krstic2009SCL,TangSCL2011} and   the wave equation \cite{Krstic2009TAC}.
All are by the  PDE backstepping method (\cite{KrsticDelaybook}).

The PDE backstepping method is    powerful  in the observer design for  ODE-PDE cascade systems.
However, it relies on a priori target system before the backstepping transformation can be performed. This means that the successful employment of the backstepping method requires a proper choice of target system.
Although  such choice seems natural  in many situations like those in   \cite{SmyshlyaevKrstic2008SCL,Krstic2009SCL,Krstic2009TAC},  it is more or less   relying  on intuition rather than strict  analysis.
This limits the applicability of the PDE backstepping method and can be seen
from the systems described by   Euler-Bernoulli beam equations or multi-dimensional
PDEs for which the backstepping method is hardly  applied.
Roughly speaking, a  sufficient condition  that  ensures  the applicability of the PDE backstepping method is still lacking.
Moreover,  in the PDE backstepping method, the Lyapunov function has to  be constructed
in the stability analysis for the resulting closed-loop system, which gives rise to some challenges  in particular  for the treatment of  PDEs with delays.

System observation through sensor dynamics can usually be modeled as a    cascade  observation  system.
Similarly, in   output  regulation for  linear  systems,  the control plant driven by the  disturbance generated from an exosystem  can also be described by a cascade system.
This implies  that the sensor dynamics compensation is closely related to the   observer design for estimation of the disturbance and the state simultaneously in the output regulation, although these two problems were investigated separately in literature.
As will be seen, these two problems can be connected equivalently by an invertible transformation constructed by the well known  regulator  equations in the output regulation.  As a consequence, an  error based observer can be designed for  the output regulation of    abstract  systems.

In this paper, we unify  various  types of sensor dynamics compensations from a general abstract framework point of view.
We model  the control system  with indirect sensor configuration as a cascade system.
Let $X_j$, $U_j$, and $Y_j$   be  Hilbert spaces and let
$A_j:X_j\to X_j$, $B_j:U_j\to X_j$,   $C_1:X_1\to Y_1$ and $C_2:X_2\to U_1$
be  related operators which are possibly unbounded, $j=1, 2$.
The considered  cascade system is described by
\begin{equation} \label{wxh20202131701}
 \left\{\begin{array}{l}
\disp \dot{x}_1(t) = A_1 x_1(t)+B_1C_2x_2(t) ,\crr
\disp \dot{x}_2(t) = A_2 x_2(t)+B_2u(t)  ,\crr
\disp y(t)=C_1x_1(t),
\end{array}\right.
\end{equation}
where  $y(t)$ is the  measured output,  $u(t)$ is the control and $B_1C_2:X_2\to X_1$   represents the interconnection.
The $x_2$-subsystem of \dref{wxh20202131701} is the control plant but the sensor is installed indirectly  on the  $x_1$-subsystem.

The aim  of this paper is to develop a systematic methodology for  the observer design  of  system \dref{wxh20202131701}.
The observer  we shall design  is   the well known Luenberger-like observer.
Owing to the  generality of   system \dref{wxh20202131701}, the  sensor dynamics  dominated by either  ODE or PDE  like the first order hyperbolic equation,  heat equation, wave equation
and the Euler-Bernoulli beam equation can be compensated effectively  in a  unified framework.
When  $A_2$ is a matrix, the existence of the observer gains  can be characterized by  the system \dref{wxh20202131701}  itself.
More importantly, the more complicated problem that   observation of  the infinite-dimensional system through finite-dimensional sensor dynamics can still be addressed effectively.
To demonstrate the effectiveness of the proposed approach,  the observer design  for an  unstable heat equation with   ODE sensor dynamics  is considered.

We organize the paper as follows. In Section \ref{motivation}, we
demonstrate  the  main idea of   sensor dynamics compensation  through
a finite-dimensional cascade system.
Some  preliminary results on abstract systems are presented  in Section  \ref{preliminary}.
The well-posedness of the open-loop system is discussed in Section \ref{Se.4}.
The observability is discussed in Section \ref{Observability}.
Section \ref{Observer1} is devoted to  the Luenberger-like observer design  for   abstract linear systems, where the results in Section \ref{motivation} for the finite-dimensional system
are extended to the infinite-dimensional counterparts.
The well-posedness and the exponential convergence  of the observer   are obtained.
As an application,  the disturbance and state  observation in output regulation is investigated
in Section \ref{secDisturbance}.
To validate the theoretical results,  we consider the observations of ODEs
with output  delay in  Section  \ref{ODE+delay}  and an unstable heat system with ODE sensor  dynamics in     Section \ref{heat+ODE}.
Some numerical simulations are presented in Section \ref{simulation} to show the theory visually,
followed up conclusions in Section \ref{remarksensor}.
For the sake of  readability, some results which  are less relevant to the dynamics compensator design are arranged  in the Appendix.

Throughout the paper, the inner product of the Hilbert space $X$  is denoted by $\langle\cdot,\cdot\rangle_{X}$, and the induced  norm is denoted by $\|\cdot\|_{X}$.
Identity operator on the Hilbert space $X_i$ is denoted by $I_i, i=1,2$. The space of bounded linear operators from  $X_1$ to $X_2$ is denoted by $\mathcal{L}(X_1, X_2)$.
If $A\in \mathcal{L}(X_1, X_2)  $, we represent  the kernel,   domain,
    resolvent set and the spectrum of $A$ as  ${\rm Ker} (A)$,  $D (A)$,
$\rho(A)$ and  $\sigma(A)$, respectively.

\section{Motivation }\label{motivation}
As mentioned, this section demonstrates  the  main idea of the observer design for   system  \dref{wxh20202131701} in the  setting of  finite-dimensional state  space.
The Luenberger observer  of system  \dref{wxh20202131701} is designed as
\begin{equation} \label{wxh20202061740}
 \left\{\begin{array}{l}
\disp \dot{\hat{x}}_1(t) =  A_1\hat{x}_1(t)+B_1C_2 \hat{x}_2 (t)-F_1[y(t)- C_1\hat{x}_1(t)],\crr
\disp  \dot{\hat{x}}_2(t) =A_2\hat{x}_2(t)+F_2[y(t)-C_1\hat{x}_1(t)]+B_2u(t),
\end{array}\right.
\end{equation}
where $F_j\in \mathcal{L}(Y_1,X_j), j=1,2$ are the gain vectors  to be determined.
When system  \dref{wxh20202131701} is observable, $F_1$ and $F_2$ can be chosen easily by
the pole assignment theorem of the  linear  systems.
However,  the problem becomes complicated if  either $A_1$ or  $A_2$ is an operator in   an infinite-dimensional Hilbert space.
We therefore need  an alternative way to find $F_1$ and $F_2$ to be adaptive to the infinite-dimensional setting.
Let the  observer errors be
\begin{equation} \label{wxh20202061802}
\tilde{x}_j(t)=x_j(t)-\hat{x}_j(t),\ \ j=1,2,
\end{equation}
which  are  governed by
\begin{equation} \label{wxh20202061803}
 \left\{\begin{array}{l}
\disp \dot{\tilde{x}}_1(t) = (A_1+F_1C_1)\tilde{x}_1(t)+B_1C_2 \tilde{x}_1(t),\crr
\disp  \dot{\tilde{x}}_2(t) =  A_2 \tilde{x}_2(t)-F_2C_1\tilde{x}_1(t).
\end{array}\right.
\end{equation}
If we  select  $F_1$ and $F_2$ properly such that  system \dref{wxh20202061803} is stable, then
$(x_1(t),x_2(t))$  can be estimated in the sense that
\begin{equation} \label{wxh20202062106}
  \|(x_1(t)-\hat{x}_1(t) ,{x}_2(t)-\hat{x}_2(t))\|_{X_1\times X_2}\to  0
  \ \ \mbox{as}\ \ t\to\infty.
\end{equation}
Inspired by the first part \cite{FPart1} of this series works, the $F_1$ and $F_2$ can be chosen
easily by decoupling  the   system \dref{wxh20202061803} as a cascade system.
The corresponding transformation is
\begin{equation} \label{wxh20202061804}
\begin{array}{l}
\begin{pmatrix}
I_1&S\\
0&I_2
\end{pmatrix}
\begin{pmatrix}
A_1+F_1C_1&B_1C_2 \\
-F_2C_1&A_2
\end{pmatrix}\begin{pmatrix}
I_1&S\\
0&I_2
\end{pmatrix}^{-1}\crr
=
\begin{pmatrix}
A_1+(F_1-SF_2)C_1& SA_2-[A_1+(F_1-SF_2)C_1]S+B_1C_2 \\
-F_2C_1&A_2+F_2C_1S
\end{pmatrix},
\end{array}
\end{equation}
where $S\in \mathcal{L}( X_2,X_1)$ is to be determined.
If we select $S$ properly such that
\begin{equation} \label{20208292004}
SA_2-[A_1+(F_1-SF_2)C_1]S+B_1C_2=0,
\end{equation}
  then   the right side matrix of \dref{wxh20202061804} is Hurwitz if and only if the matrices $A_1+(F_1-SF_2)C_1$ and $A_2+F_2C_1S$ are Hurwitz.

\begin{lemma}\label{wxh20202061806}
Let    $X_1, X_2$, $U_1$, $U_2$ and $ Y_1$ be Euclidean spaces, and let $A_j\in \mathcal{L}(X_j)$,    $B_j\in  \mathcal{L}(U_j ,X_j)$, $j=1,2$,  $ C_2  \in \mathcal{L}(X_2,U_1)$ and $C_1\in \mathcal{L}(X_1,Y_1)$.
Suppose that system \dref{wxh20202131701} is observable.
Then, there exist $F_1\in \mathcal{L}(Y_1,X_1)$ and $F_2\in \mathcal{L}(Y_1,X_2)$ such that the solution of the observer \dref{wxh20202061740} satisfies:
\begin{equation} \label{wxh20202061808}
  \|(x_1(t)-\hat{x}_1(t), x_2(t)-\hat{x}_2(t))\|_{X_1\times X_2}\to  0
  \ \ \mbox{as}\ \ t\to\infty.
\end{equation}
Moreover, $F_1$ and $F_2$ can be selected  by the  scheme:
(a)~ Select  $F_0\in \mathcal{L}(Y_1,X_1)$ such that $A_1+F_0C_1$ is  Hurwtiz;
(b)~  Solve the Sylvester equation $(A_1+F_0C_1)S-SA_2=B_1C_2 $;
(c)~ Select  $F_2\in \mathcal{L}(Y_1,X_2)$ such  that  $A_2+F_2C_1S$ is  Hurwtiz;
(d)~  Set $F_1=F_0+SF_2$.
\end{lemma}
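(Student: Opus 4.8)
The plan is to reduce the convergence claim \dref{wxh20202061808} to a spectral condition on the error dynamics, and then to split that condition into two decoupled pieces via the similarity \dref{wxh20202061804}. Let $\mathcal{A}=\begin{pmatrix} A_1+F_1C_1 & B_1C_2 \\ -F_2C_1 & A_2\end{pmatrix}$ be the coefficient matrix of the error system \dref{wxh20202061803}, and let $T=\begin{pmatrix} I_1 & S\\ 0& I_2\end{pmatrix}$, which is invertible with $T^{-1}=\begin{pmatrix} I_1 & -S\\ 0& I_2\end{pmatrix}$. The identity \dref{wxh20202061804} shows that once $S$ solves \dref{20208292004} the upper-right block of $T\mathcal{A}T^{-1}$ vanishes, so $\mathcal{A}$ is similar to a block lower-triangular matrix with diagonal blocks $A_1+(F_1-SF_2)C_1$ and $A_2+F_2C_1S$. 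Hence $\mathcal{A}$ is Hurwitz if and only if both diagonal blocks are Hurwitz, and in that case $(\tilde x_1(t),\tilde x_2(t))\to 0$ exponentially, which is exactly \dref{wxh20202061808}. With the choice $F_1=F_0+SF_2$ of step (d) we have $F_1-SF_2=F_0$, so the first block is $A_1+F_0C_1$ and \dref{20208292004} reduces to the Sylvester equation of step (b); it therefore suffices to justify that steps (a)--(c) can be carried out.

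For step (a) I would first observe that observability of \dref{wxh20202131701} forces the pair $(C_1,A_1)$ to be observable. Applying the PBH (Hautus) test to \dref{wxh20202131701} on a vector of the form $(v_1,0)$, any $v_1$ with $A_1v_1=\lambda v_1$ and $C_1v_1=0$ would furnish a nonzero kernel element of the PBH matrix, contradicting full observability; thus $(C_1,A_1)$ is observable, and by pole assignment there is $F_0\in\mathcal{L}(Y_1,X_1)$ making $A_1+F_0C_1$ Hurwitz. Since the assignable spectrum is completely free, I may additionally require $\sigma(A_1+F_0C_1)\cap\sigma(A_2)=\emptyset$. This disjointness is precisely the condition under which the Sylvester equation of step (b) has a unique solution $S\in\mathcal{L}(X_2,X_1)$, so step (b) is well defined.

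The crux is step (c): I must show that the pair $(C_1S,A_2)$ is observable, for then pole assignment again supplies $F_2\in\mathcal{L}(Y_1,X_2)$ with $A_2+F_2C_1S$ Hurwitz. This is where the Sylvester equation is used essentially. Suppose, for contradiction, that there is $v_2\neq0$ with $A_2v_2=\lambda v_2$ and $C_1Sv_2=0$. Put $w=Sv_2$; applying $S$ to $A_2v_2=\lambda v_2$ and using $(A_1+F_0C_1)S-SA_2=B_1C_2$ gives $(A_1+F_0C_1-\lambda I_1)w=B_1C_2v_2$, and since $C_1w=C_1Sv_2=0$ this collapses to $(A_1-\lambda I_1)w=B_1C_2v_2$. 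Consequently the nonzero vector $(Sv_2,-v_2)=(w,-v_2)$ lies in the kernel of the PBH matrix of \dref{wxh20202131701} at $\lambda$, contradicting the observability of \dref{wxh20202131701}. Hence no such $v_2$ exists, $(C_1S,A_2)$ is observable, and step (c) succeeds.

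Putting the pieces together, with $F_0,S,F_2$ from (a)--(c) and $F_1=F_0+SF_2$ from (d), both diagonal blocks $A_1+F_0C_1$ and $A_2+F_2C_1S$ are Hurwitz, so $\mathcal{A}$ is Hurwitz and \dref{wxh20202061808} follows. I expect the only genuinely nontrivial step to be the observability of $(C_1S,A_2)$ in step (c); the remaining ingredients---the PBH test, the pole assignment theorem, unique solvability of the Sylvester equation under disjoint spectra, and the equivalence between a Hurwitz coefficient matrix and decay of the solution---are all standard for Euclidean state spaces.
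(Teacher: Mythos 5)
Your proposal is correct, and its skeleton coincides with the paper's: reduce \dref{wxh20202061808} to Hurwitzness of the error matrix, decouple it by the similarity \dref{wxh20202061804} once $S$ solves the Sylvester equation of step (b), note that $F_1=F_0+SF_2$ turns \dref{20208292004} into that equation, and then carry out steps (a)--(c) exactly as stated. The one place where you genuinely diverge is the crux, the observability of $(A_2,C_1S)$. The paper establishes it structurally: observability is invariant under output injection, so after the state transformation the block-diagonal pair $\left(\mathrm{diag}(A_1+F_0C_1,A_2),\,(C_1,-C_1S)\right)$ inherits observability from \dref{wxh20202131701}, and the spectral disjointness $\sigma(A_1+F_0C_1)\cap\sigma(A_2)=\emptyset$ together with a cited companion-paper result ([Lemma 10.2] of the first part) peels off observability of $(A_2,C_1S)$. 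You instead run a direct Hautus computation: an unobservable eigenvector $v_2$ of $(A_2,C_1S)$ at $\lambda$, pushed through the Sylvester identity $(A_1+F_0C_1-\lambda)Sv_2=B_1C_2v_2$ and the cancellation $F_0C_1Sv_2=0$, yields the nonzero PBH kernel vector $(Sv_2,-v_2)$ for the original cascade, contradicting its observability. (Your cancellation of the $F_0C_1$ term is precisely the output-injection invariance of the paper's proof, made concrete.) Your route is more elementary and self-contained---it needs neither the invariance-under-output-feedback statement nor the external Lemma 10.2, and it localizes exactly how spectral disjointness and the Sylvester equation are used; the paper's route, while leaning on a cited lemma, is the one that generalizes verbatim to the infinite-dimensional setting of Theorem \ref{Th2020681759} and Corollary \ref{Co2020425849}, where pointwise PBH computations with eigenvectors are no longer adequate. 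Both are valid proofs of the finite-dimensional lemma.
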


\begin{proof}
Since  system \dref{wxh20202131701} is observable, its cascade structure  implies that
$(A_1,C_1)$ is observable as well.
As a result, there exists an   $F_0\in \mathcal{L}(Y_1,X_1)$ such that $A_1+F_0C_1$ is Hurwtiz,
and at the same time,
 \begin{equation} \label{wxh20202061807}
\sigma(A_1+F_0C_1)\cap\sigma(A_2)=\emptyset.
\end{equation}
 By \cite{Rosenblum1956DMJ}, the Sylvester equation
  \begin{equation} \label{wxh20202061805}
(A_1+F_0C_1) S -SA_2=B_1C_2
\end{equation}
admits a unique solution $S\in \mathcal{L}( X_2,X_1)$.
Hence,  the invertible transformation  in \dref{wxh20202061804} satisfies
\begin{equation} \label{20208111108}
\begin{array}{l}
\begin{pmatrix}
I_1&S\\
0&I_2
\end{pmatrix}
\begin{pmatrix}
A_1+F_0C_1&B_1C_2 \\
0&A_2
\end{pmatrix}\begin{pmatrix}
I_1&S\\
0&I_2
\end{pmatrix}^{-1}
=
\begin{pmatrix}
A_1+F_0C_1& 0 \\
0&A_2
\end{pmatrix}
\end{array}
\end{equation}
and
\begin{equation} \label{20208111112}
( C_1 , 0) \begin{pmatrix}
I_1&  S\\
0 &I_2
\end{pmatrix}^{-1}=
 ( C_1 , - C_1S) .
\end{equation}
Since the observability of system \dref{wxh20202131701} keeps invariant  under the output feedback,
system \dref{wxh20202131701} is observable if and only if the pair
\begin{equation} \label{wxh20202071746}
\left(\begin{pmatrix}
 A_1+F_0C_1 & 0\\
0 &A_2
\end{pmatrix},
 ( C_1 , - C_1S)\right)
\end{equation}
is observable.
By  \dref{wxh20202061807},  the observability of \dref{wxh20202071746} implies that $(A_2, C_1S)$ is observable \cite[Lemma 10.2]{FPart1}.
Therefore, there exists an  $F_2\in \mathcal{L}(Y_1,X_2)$ such that
$ A_2+F_2C_1S$ is Hurwitz.
Thanks to the  scheme of choosing  $F_1$, equation \dref{wxh20202061805} becomes  \dref{20208292004}. As a result,  \dref{wxh20202061804} becomes
\begin{equation} \label{wxh20202062100}
\begin{array}{l}
\begin{pmatrix}
I_1&S\\
0&I_2
\end{pmatrix}
\begin{pmatrix}
A_1+F_1C_1&B_1C_2 \\
-F_2C_1&A_2
\end{pmatrix}\begin{pmatrix}
I_1&S\\
0&I_2
\end{pmatrix}^{-1}
=
\begin{pmatrix}
A_1+F_0C_1 & 0\\
-F_2C_1&A_2+F_2C_1S
\end{pmatrix},
\end{array}
 \end{equation}
which is obviously Hurwitz.
This shows that the error system \dref{wxh20202061803} is  stable and hence \dref{wxh20202061808} holds true
in terms of  \dref{wxh20202061802}.
This completes the proof of the lemma.
\end{proof}

\section{Preliminaries on abstract linear systems}\label{preliminary}
In order to extend the finite-dimensional  results in Section \ref{motivation} to the
infinite-dimensional systems, we  introduce some necessary background on the infinite-dimensional linear systems, in particular for those
systems with unbounded control and observation operators,  which has been extensively discussed in \cite{TucsnakWeiss2009book}.

Suppose that $X$ is a Hilbert space and  $A :  D (A)\subset X \to X $ is a densely defined operator  with $\rho(A) \neq \emptyset$. The operator $A$ can determine two Hilbert spaces:
$( D (A), \|\cdot\|_1)$ and $([ D (A^*)]', \|\cdot\|_{-1})$, where
$ [ D (A^*)]' $ is the dual space of $  D (A) $ with respect to   the pivot space $X$, and the norms $\|\cdot\|_1$ and $\|\cdot\|_{-1}$ are defined   respectively by
\begin{equation} \label{20191141722}
 \left\{\begin{array}{l}
 \disp \|x\|_1=\|(\beta-A)x\|_X,\ \ \forall\ x\in  D (A),\crr
 \disp \|x\|_{-1}=\disp \|(\beta-A)^{-1}x\|_X,   \ \ \forall\ x\in X,
 \end{array}\right. \ \ \beta\in\rho(A).
\end{equation}
These two spaces are independent of the choice of $\beta\in\rho(A)$ because   different  choices
of $\beta$ lead  to  equivalent norms.
 For the sake of brevity, we denote the two spaces   as
$  D (A) $ and $ [ D (A^*)]' $ thereafter.
 The adjoint of $A^*\in \mathcal{L}( D (A^*),X)$, denoted by  $\tilde{A}$, is defined    as
 \begin{equation} \label{20191121602}
 \disp     \langle \tilde{A} x,y\rangle_{ [ D (A^*)]',  D (A^*)}=
 \langle x,A ^*y\rangle_{X },\ \ \forall\ x\in X ,\  y\in  D (A^*).
\end{equation}
Evidently,  $\tilde{A} x=Ax$ for any $x\in  D (A)$, which means that
  $\tilde{A}\in \mathcal{L}(X,  [ D (A^*)]')$ is an extension  of $A $. Since $A$ is
densely defined,  the extension is unique. By \cite[Proposition 2.10.3]{TucsnakWeiss2009book},
  $(\beta-\tilde{A})\in \mathcal{L}(X,[ D (A^*)]')$  and
$(\beta-\tilde{A})^{-1}\in \mathcal{L}( [ D (A^*)]',X)$, which imply that
$\beta-\tilde{A}$ is an isomorphism from $X$ to $[ D (A^*)]' $.
If the operator $A$ generates a $C_0$-semigroup $e^{At}$ on $X$, then so is for its extension $\tilde{A}$ and  $ e^{\tilde{A}t}=  (\beta-\tilde{A})  e^{At}(\beta-\tilde{A})^{-1}$ for any $ \beta\in\rho(A)$.

Suppose that $Y$ is an output Hilbert space and  $C\in \mathcal{L}(D(A),Y)$. The
$\Lambda$-extension of $C $  with respect to $A $  is defined by
   \begin{equation} \label{20206121528}
 \left\{ \begin{array}{l}
 \disp C_{ \Lambda}x=\lim\limits_{\lambda\rightarrow +\infty}
C \lambda(\lambda  - {A} )^{-1}x,\ \ x\in  D (C_{ \Lambda}),\crr
\disp D (C_{ \Lambda})=\{x \in X \ |\  \mbox{the above limit exists}\}.
\end{array}\right.
\end{equation}
  Define the norm
\begin{equation} \label{20205281609}
  \|x\|_{D(C_{\Lambda})}= \|x\|_X+\sup_{\lambda\geq\lambda_0}
  \|C \lambda(\lambda  - {A} )^{-1}x\|_Y,\ \ \forall\ x\in D(C_{\Lambda}),
\end{equation}
where $\lambda_0\in \R$ is any number so that  $[\lambda_0, \infty ) \subset \rho(A) $.
By  \cite[Proposition 5.3]{Weiss1994MCSS},  $D(C_{ \Lambda})$ with norm   $\|\cdot\|_{D(C_{\Lambda})}$ is
   a Banach space  and
$C_{ \Lambda}\in \mathcal{L}(  D(C_{ \Lambda}),Y)$.
Moreover,
  there exist continuous embeddings:
 \begin{equation} \label{20206121529}
 D(A) \hookrightarrow  D(C_{ \Lambda})\hookrightarrow  X   \hookrightarrow [ D (A ^*)]'.
\end{equation}

\begin{proposition}\label{Regular}  (\cite{Weiss1997TAC})
Let $X$, $U$ and $Y$ be the state space,     input  space and the output space, respectively.
The system
 $(A,B,C)$ is said to be a regular linear system    if and only if
the following assertions hold:

  (i)~ $A$ generates a $C_0$-semigroup $e^{At}$ on $X$;

 (ii)~ $B\in \mathcal{L}(U,[D(A^*)]')$ and  $C\in \mathcal{L}(D(A), Y)$ are admissible for $e^{At}$;

 (iii)~ $C_{\Lambda}(s -\tilde{A})^{-1}B$ exists for
some (hence for every) $s\in \rho(A)$;

(iv)~ $s\rightarrow \|C_{\Lambda}(s -\tilde{A})^{-1}B\|$ is bounded on some right half-plane.
  \end{proposition}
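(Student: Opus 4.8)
The plan is to read this not as a bare definition but as Weiss's intrinsic characterization of regularity, and to take as the working definition to be matched against (i)--(iv) the existence of the feedthrough operator: a well-posed system $(A,B,C)$ is regular precisely when the strong limit $Dv=\lim_{s\to+\infty}\mathbf{G}(s)v$ exists along the positive real axis for every $v\in U$, where $\mathbf{G}$ is the transfer function. With this reference definition fixed, I would establish the two implications of the ``if and only if'' separately, leaning on the well-posed linear systems machinery recalled before the statement.

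For the forward direction I would unfold what ``regular linear system'' already encodes. By construction a regular system carries a generator $A$ of a $C_0$-semigroup $e^{At}$, which is (i), together with control and observation operators $B\in\mathcal{L}(U,[D(A^*)]')$ and $C\in\mathcal{L}(D(A),Y)$ admissible for $e^{At}$, which is (ii). The regularity hypothesis then produces the representation $\mathbf{G}(s)=C_{\Lambda}(s-\tilde{A})^{-1}B+D$ on $\rho(A)$; in particular $(s-\tilde{A})^{-1}Bv\in D(C_{\Lambda})$, so the product $C_{\Lambda}(s-\tilde{A})^{-1}B$ is well defined, giving (iii). Finally, since the transfer function of any well-posed system is bounded and analytic on some right half-plane and $C_{\Lambda}(s-\tilde{A})^{-1}B=\mathbf{G}(s)-D$ differs from $\mathbf{G}$ only by the constant bounded operator $D$, assertion (iv) follows immediately.

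For the converse I would start from (i)--(ii), which make $(A,B,C)$ a well-posed triple and hence produce a transfer function $\mathbf{G}$; the remaining task is to recover the feedthrough. Using (iii) I would form $C_{\Lambda}(s-\tilde{A})^{-1}Bv$ and, via the resolvent identity $(s-\tilde{A})^{-1}-(\sigma-\tilde{A})^{-1}=(\sigma-s)(s-\tilde{A})^{-1}(\sigma-\tilde{A})^{-1}$ together with the admissibility estimates that bound $C_{\Lambda}$ on $D(C_{\Lambda})$, show that this family has a strong limit as $s\to+\infty$. Condition (iv) is exactly what keeps the family equibounded so that the limit passage is legitimate, and the resulting limit is the feedthrough $D$, which is what the intrinsic definition of regularity demands.

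I expect the converse, specifically extracting the strong limit from the pointwise existence in (iii) plus the uniform boundedness in (iv), to be the main obstacle: it is precisely here that one must combine the delicate admissibility inequalities for the $\Lambda$-extension $C_{\Lambda}$ with the analyticity and boundedness of $\mathbf{G}$, as in Weiss's original analysis, rather than through any routine estimate. The forward direction and the verification of (i) and (ii) are comparatively bookkeeping once the well-posed linear systems framework of Section~\ref{preliminary} is in place.
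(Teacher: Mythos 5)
The paper never proves this statement: Proposition \ref{Regular} is quoted from \cite{Weiss1997TAC} and functions in this paper as the \emph{definition} of a regular linear system (note the phrasing ``is said to be''), so there is no internal proof to compare against; the substance lives in the cited work of Weiss and of Curtain--Weiss. Measured against that literature, your overall architecture is the right one --- split the equivalence into (a) when a triple $(A,B,C)$ is jointly well-posed and (b) when a well-posed system is regular, with (b) governed by the representation $\mathbf{G}(s)=C_{\Lambda}(s-\tilde{A})^{-1}B+D$ --- and your forward direction is fine modulo citing that representation theorem, which is itself the nontrivial content rather than a consequence of the bare definition.

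Your converse, however, has two genuine defects. First, the claim that (i)--(ii) already ``make $(A,B,C)$ a well-posed triple and hence produce a transfer function'' is false: admissibility of $B$ and of $C$ separately does not imply the existence of a bounded input--output map, which is an additional requirement. It is precisely (iii)--(iv), fed into the Curtain--Weiss theorem on well-posedness of triples of operators, that supply it: one takes $H(s)=C_{\Lambda}(s-\tilde{A})^{-1}B$, verifies the compatibility identity $H(s)-H(\sigma)=(\sigma-s)\,C(s-A)^{-1}(\sigma-\tilde{A})^{-1}B$ (which also gives the ``some hence every'' in (iii) via the resolvent identity), and uses the half-plane bound (iv) to construct the well-posed system. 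Second, your plan to extract the strong limit of $C_{\Lambda}(s-\tilde{A})^{-1}Bv$ as $s\to+\infty$ from equiboundedness cannot work as stated: a bounded family has no reason to converge, and the resolvent identity only transfers existence of the product between different values of $s$, not convergence at infinity. The actual argument (Weiss, TAMS 1994: a well-posed system is regular if and only if $(s-\tilde{A})^{-1}BU\subset D(C_{\Lambda})$) is a delicate Tauberian-type analysis via the integral representation of $\mathbf{G}$ through the semigroup, and it yields that the limit is $0$ --- consistent with $\mathbf{G}(s)v\to Dv$ in the representation --- rather than recovering $D$ itself as you suggest. You honestly flag this step as the main obstacle, but deferring it to ``Weiss's original analysis'' leaves the crux of the converse unproven; as it stands the proposal is a correct roadmap whose two pillars (joint well-posedness and the limit theorem) are invoked but not established.
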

\begin{definition}\label{De20204191058} (\cite{Weiss1997TAC})
  Let $X$ and $Y$  be Hilbert spaces, let    $A$  be
the generator of $C_0$-semigroup $e^{At}$ on $X$ and let
  $C\in \mathcal{L}(D(A),Y)$. The operator  $L\in \mathcal{L}(Y,[D(A^*)]')$
is said to detect  system  $(A,C)$  exponentially
 if (a)~ $(A,L,C)$ is a regular linear system; (b)~  there exists an $s\in\rho(A)$ such that $I$   is an admissible feedback operator for    $C_{\Lambda}(s-\tilde{A})^{-1}L$; (c)~  $A+L C_{\Lambda}$ is exponentially stable.
\end{definition}

For other concepts of the
admissibility for both control and observation operators, and regular linear systems,
we refer to \cite{Weiss1989SICON,Weiss1989,Weiss1994MCSS,Weiss1994TAMS}.

\begin{definition}\label{WXH2020220943}\cite{FPart1}
Suppose that  $X$  is  a Hilbert space and
  $A_j: D(A_j)\subset X\to X$ is a densely defined operator with $\rho(A_j)\neq\emptyset$,
  $j=1,2$.
  We say that the operators   $ A_1 $  and  $ A_2 $ are  similar     with the
 transformation $P$, denoted by $ A_1 \sim_{P}  A_2 $,   if
  the  operator  $P \in \mathcal{L}(X)$  is   invertible   and satisfies
 \begin{equation} \label{wxh20202291136}
PA_1P^{-1}=A_2 \ \ \mbox{and}\ \   D (A_2) = P  D (A_1).
\end{equation}
\end{definition}

\begin{lemma}\label{THwxh20202291137}
Let $X $ and $Y$ be Hilbert spaces.
Suppose that  the operator $A_j :  D (A_j )\subset X \to X $  generates a $C_0$-semigroup
 $e^{A_jt}$ on   $X $ and $C_j\in \mathcal{L}( D (A_j),Y)$, $j=1,2$.
  If there is a $P\in \mathcal{L}(X)$ such that  $A_1\sim_{P} A_2$ and  $C_1=C_2P$,
 then, the following assertions hold true:

(i)~ $C_1$ is admissible for $e^{A_1 t}$ if and only if  $ C_2$
 is admissible for  $e^{A_2t}$;

 (ii)~   $(A_1,C_1)$ is exactly (approximately) observable if and only if   $(A_2, C_2)$ is exactly (approximately) observable.
\end{lemma}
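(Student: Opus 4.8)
The plan is to reduce both assertions to a single intertwining identity for the output trajectories and then transfer the governing $L^2$-estimates using only the boundedness of $P$ and $P^{-1}$. The first step is to lift the similarity $A_1\sim_P A_2$ from the generators to the semigroups. Since $P\in\mathcal{L}(X)$ is invertible with $PA_1P^{-1}=A_2$ and $D(A_2)=PD(A_1)$, the family $Pe^{A_1t}P^{-1}$ is a $C_0$-semigroup whose generator is $PA_1P^{-1}=A_2$; by uniqueness of the semigroup generated by $A_2$ this gives
\begin{equation*}
e^{A_2t}=Pe^{A_1t}P^{-1},\qquad\text{equivalently}\qquad e^{A_2t}P=Pe^{A_1t},\quad t\ge0.
\end{equation*}
Along the way I would record that $P$ restricts to a bijection of $D(A_1)$ onto $D(A_2)$ which is bounded in the graph norms, because $A_2Px=PA_1x$ for $x\in D(A_1)$; this is exactly what makes $C_1=C_2P$ a meaningful identity in $\mathcal{L}(D(A_1),Y)$.

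The crux is then immediate: for $x\in D(A_1)$ set $z=Px\in D(A_2)$, so that by $C_1=C_2P$ and the intertwining,
\begin{equation*}
C_2e^{A_2t}z=C_2e^{A_2t}Px=C_2Pe^{A_1t}x=C_1e^{A_1t}x,\qquad t\ge0.
\end{equation*}
Thus the output trajectory of $(A_2,C_2)$ from $z$ coincides pointwise with that of $(A_1,C_1)$ from $x=P^{-1}z$, and as $x$ runs over the dense domain $D(A_1)$ the point $z$ runs over the dense domain $D(A_2)$; since all of the admissibility and observability conditions can be tested on these domains, the whole lemma now follows by substitution.

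For part (i), if $C_1$ is admissible there is $K_\tau$ with $\int_0^\tau\|C_1e^{A_1t}x\|_Y^2\,dt\le K_\tau^2\|x\|_X^2$ for all $x\in D(A_1)$; inserting the identity together with $\|x\|_X=\|P^{-1}z\|_X\le\|P^{-1}\|\,\|z\|_X$ yields the corresponding bound for $C_2e^{A_2t}z$, so $C_2$ is admissible. For part (ii), exact observability propagates the same way, now using the lower bound $\|x\|_X\ge\|P\|^{-1}\|z\|_X$ to carry the observability inequality for $(A_1,C_1)$ into one for $(A_2,C_2)$; approximate observability is softer still, since the identity shows that the unobservable subspace of $(A_2,C_2)$ is exactly the $P$-image of that of $(A_1,C_1)$, and one is trivial iff the other is because $P$ is invertible. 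Each converse is obtained for free from the symmetry $A_2\sim_{P^{-1}}A_1$, $C_2=C_1P^{-1}$.

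The only genuinely delicate point is the first step, passing from similarity of the (possibly unbounded) generators to the intertwining of the semigroups while correctly tracking domains and the graph-norm boundedness of $P$. Once that identity is in hand, everything reduces to a mechanical substitution followed by the trivial two-sided bound $\|P\|^{-1}\|z\|_X\le\|P^{-1}z\|_X\le\|P^{-1}\|\,\|z\|_X$, and I would lean on symmetry throughout to avoid writing out each converse direction separately.
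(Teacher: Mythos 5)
Your proposal is correct and follows essentially the same route as the paper: the paper's proof is precisely the trajectory identity $C_2e^{A_2t}x_2=C_2Pe^{A_1t}P^{-1}Px_1=C_1e^{A_1t}x_1$ combined with the two-sided bounds $\|x_2\|_X\le\|P\|\,\|x_1\|_X$ and $\|x_1\|_X\le\|P^{-1}\|\,\|x_2\|_X$, exactly as in your substitution argument. Your write-up merely makes explicit what the paper leaves implicit --- the lift $e^{A_2t}=Pe^{A_1t}P^{-1}$ via uniqueness of the generator and the symmetry $A_2\sim_{P^{-1}}A_1$, $C_2=C_1P^{-1}$ for the converse directions --- which is a welcome sharpening, not a different proof.
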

\begin{proof}
Since  $A_1\sim_{P} A_2$, for any $x_2\in  D (A_2)$,  there exists an  $  x_1\in   D (A_1)$ such that $Px_1=x_2$.
Consequently,
\begin{equation}\label{wxh2020371534}
C_2e^{A_2t}x_2 = C_2 Pe^{A_1t}P^{-1}Px_1 = C_1e^{A_1t}x_1  .
\end{equation}
Since $\|x_2\|_{ X }\leq \|P\|\|x_1\|_{ X}$ and  $\|x_1\|_{X}\leq \|P^{-1}\|\|x_2\|_{X}$,
the desired  results can be concluded from the definitions of  the admissibility and the  exact  (approximate) observability.
\end{proof}

\section{Well-posedness of observation system  }\label{Se.4}
This section is devoted to the well-posedness of open-loop system \dref{wxh20202131701}.
We shall show that   the mapping from each initial data and control input
to the state and the output  is continuous.
System \dref{wxh20202131701} can be written as an abstract triple  $(\A,\mathcal{B}, \mathcal{C})$, where
 \begin{equation} \label{20205121037}
 \A =\begin{pmatrix}
 \tilde{A}_1&B_1C_{2\Lambda}\\0&\tilde{A}_2
\end{pmatrix},
\left.\begin{array}{l}
 \disp D(\A )=\left\{\begin{pmatrix}
  x_1 \\x_2 \end{pmatrix} \in X_{1}\times X_{2}  \ \Big{|} \
  \begin{array}{l}\disp \tilde{A}_1x_1 +B_1C_{2\Lambda}x_2\in X_1\\ \tilde{A}_2x_2\in X_2
  \end{array} \right\},
 \end{array}\right.
 \end{equation}
and
 \begin{equation} \label{2020521722}
 \mathcal{B}=
 \begin{pmatrix}
 0\\
 B_2
 \end{pmatrix},\ \ \  \mathcal{C}=( C_{1\Lambda },0),\ \ \ D(\mathcal{C})=D(C_{1\Lambda })\times X_2.
\end{equation}
   \begin{lemma}\label{Lm2020652112}
 Suppose that   the operator  $A_j$ generates  a $C_0$-semigroup $e^{A_jt}$ on $X_j$, $B_j\in \mathcal{L}(U_j,[D(A_j^*)]')$ is admissible for $e^{A_jt}$ and  $C_j \in \mathcal{L}(D(A_j),Y_j) $ is admissible for
 $e^{A_jt}$ with $Y_2=U_1$, $j=1,2$.
Then,   the operator $\A$    defined by \dref{20205121037} generates a
 $C_0$-semigroup $e^{\A  t}$ on $X_1\times X_2$.  Moreover,
 the following assertions hold:

 (i)  If  system $(A_1,B_1,C_{1 })$ is regular, then
  $\mathcal{C} $  is  admissible for $e^{\A t}$;

  (ii)  If   system $(A_2,B_2,C_{2 })$ is regular, then
   $\mathcal{B} $  is  admissible for $e^{\A t}$;

   (iii)  If both   systems $(A_1,B_1,C_{1})$  and $(A_2,B_2,C_{2 })$ are regular, then
  system $(\A,\mathcal{B},\mathcal{C})$ is  also regular.
 \end{lemma}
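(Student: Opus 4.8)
The plan is to construct the semigroup $e^{\mathcal{A}t}$ explicitly from the two given semigroups and the admissible coupling, identify its generator with $\mathcal{A}$ via the resolvent, and then read off the admissibility and regularity assertions from a transfer-function factorization. First I would exploit the upper-triangular structure of $\mathcal{A}$: the $x_2$-component should evolve freely under $e^{A_2t}$, while the $x_1$-component is driven by the coupling $B_1C_{2\Lambda}$. This suggests the candidate
\[
e^{\mathcal{A}t}\begin{pmatrix}x_1\\x_2\end{pmatrix}=\begin{pmatrix}e^{A_1t}x_1+\Phi_tx_2\\ e^{A_2t}x_2\end{pmatrix},\qquad \Phi_tx_2:=\int_0^t e^{\tilde{A}_1(t-s)}B_1C_{2\Lambda}e^{A_2s}x_2\,ds.
\]
The first point is that $\Phi_tx_2\in X_1$: since $C_2$ is admissible for $e^{A_2t}$, the map $s\mapsto C_{2\Lambda}e^{A_2s}x_2$ lies in $L^2_{\mathrm{loc}}(0,\infty;U_1)$, and since $B_1$ is admissible for $e^{A_1t}$, the associated input map carries this $L^2$ signal into $X_1$. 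Strong continuity and the exponential bound come from the same admissibility estimates, and the semigroup property reduces to the cocycle identity $\Phi_{t+\tau}=e^{A_1t}\Phi_\tau+\Phi_te^{A_2\tau}$, which I would verify by splitting the defining integral at $s=\tau$ and invoking the semigroup laws of $e^{A_1t}$ and $e^{A_2t}$.

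To confirm that the generator is exactly $\mathcal{A}$, I would compute the resolvent. For $\lambda$ in a common right half-plane contained in $\rho(A_1)\cap\rho(A_2)$, back-substitution in $(\lambda-\mathcal{A})\binom{x_1}{x_2}=\binom{f_1}{f_2}$ gives $x_2=(\lambda-A_2)^{-1}f_2$ and
\[
x_1=(\lambda-A_1)^{-1}f_1+(\lambda-\tilde{A}_1)^{-1}B_1C_{2\Lambda}(\lambda-A_2)^{-1}f_2,
\]
where the middle term is legitimate because $(\lambda-A_2)^{-1}f_2\in D(A_2)\subset D(C_{2\Lambda})$ by the embeddings \dref{20206121529} and $B_1$ maps into $[D(A_1^*)]'$. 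Checking that this bounded operator on $X_1\times X_2$ has range equal to $D(\mathcal{A})$ and inverts $\lambda-\mathcal{A}$ identifies the generator and matches it with the Laplace transform of the constructed semigroup.

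For the remaining claims I would trace signals through the cascade. In (i), $\mathcal{C}e^{\mathcal{A}s}\binom{x_1}{x_2}=C_{1\Lambda}(e^{A_1s}x_1+\Phi_sx_2)$; the first summand is $L^2$ by admissibility of $C_1$, and the second is precisely the input-output response of the well-posed system $(A_1,B_1,C_1)$ to the $L^2$ input $C_{2\Lambda}e^{A_2\cdot}x_2$, hence $L^2$ by the $L^2\to L^2$ boundedness of the I/O map of a regular system. Claim (ii) is dual: $u\mapsto x_2$ is admissible through $B_2$, while $u\mapsto x_1$ factors as $u\mapsto C_{2\Lambda}x_2\mapsto x_1$, i.e. the regular I/O map of $(A_2,B_2,C_2)$ composed with the admissible input map of $B_1$. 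For (iii) the decisive observation is that the cascade transfer function factors as a product,
\[
\mathcal{C}_\Lambda(s-\tilde{\mathcal{A}})^{-1}\mathcal{B}=\big[C_{1\Lambda}(s-\tilde{A}_1)^{-1}B_1\big]\big[C_{2\Lambda}(s-\tilde{A}_2)^{-1}B_2\big],
\]
so conditions (iii) and (iv) of Proposition \ref{Regular} for the cascade follow at once from the same conditions for the two subsystems, the generation and the admissibility already being established.

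The main obstacle is the first step, establishing generation and correctly identifying the generator with $\mathcal{A}$: the coupling $B_1C_{2\Lambda}$ is unbounded on both ends, being an unbounded observation of $x_2$ fed through an unbounded control into $x_1$, so no bounded-perturbation theorem applies and each composition must be justified through admissibility and the extrapolation spaces $[D(A_j^*)]'$. Once the explicit semigroup and resolvent are in hand, the admissibility and regularity assertions are comparatively routine consequences of the well-posedness of the two subsystems and the product structure of the transfer function.
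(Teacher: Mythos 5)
Your proposal is correct and, for assertions (i)--(iii), follows essentially the same route as the paper's proof: the same composition of the admissibility estimate for $C_2$ with the $L^2$-boundedness of the input--output map of the regular system $(A_1,B_1,C_1)$ via the formula \dref{2020661208}, the same factorization of the input map through the regular I/O map of $(A_2,B_2,C_2)$ followed by the admissible input map of $B_1$ as in \dref{2020661426}, and the same product formula for the transfer function as in \dref{2020661433}. The only divergence is at the generation step, which the paper does not prove but cites from \cite{FPart1}, whereas you construct $e^{\A t}$ explicitly via the cascade formula with $\Phi_t$ and identify the generator through the resolvent --- a sound, self-contained substitute for that citation (and consistent with the semigroup action the paper itself uses in \dref{2020661208}).
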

\begin{proof}
 It has been proved in previous study \cite{FPart1} that   the operator $\A$     generates a
 $C_0$-semigroup $e^{\A  t}$ on $X_1\times X_2$. We only need to prove (i),  (ii) and (iii).

Proof of (i).  Since    $C_2$  is admissible for $e^{A_2 t}$, for any $\tau>0$,  there exists a constant $c_{\tau}>0$
 such that
\begin{equation} \label{2020661410}
\int_0^\tau \left\| C_{2}e^{A_2s}x_2\right\|_{Y_2}^2ds\leq
c_{\tau}   \left\| x_2\right\|_{X_2}^2, \ \ \forall\
x_2\in   D(A_2).
\end{equation}
 Since  $(A_1,B_1,C_{1 })$ is a regular linear system,  there exists a constant $M_{\tau}>0$ such that
\begin{equation} \label{2020661213}
\int_0^\tau \left\| C_{1\Lambda }\int_0^te^{A_1(t-s)}B_1u_1(s)ds\right\|_{Y_1}^2dt\leq
M_{\tau} \int_0^\tau \left\|  u_1(s) \right\|_{Y_2}^2ds,\ \ \forall\
u_1\in  L^2([0,\tau]; Y_2).
\end{equation}
Combining \dref{2020661410} and \dref{2020661213}, we arrive at
\begin{equation} \label{2020661413}
 \int_0^\tau \left\| C_{1\Lambda }\int_0^te^{A_1(t-s)}B_1C_{2\Lambda}e^{A_2s}x_2ds\right\|_{Y_1}^2dt\leq
 M_{\tau}c_{\tau}\left\| x_2\right\|_{X_2}^2, \ \ \forall\
x_2\in   D(A_2).
\end{equation}
A straightforward computation shows that, for any  $(x_1,x_2)^\top\in D(\A)$,
\begin{equation} \label{2020661208}
\mathcal{C}e^{\A t}(x_1,x_2)^{\top}=C_{1\Lambda }e^{A_1t}x_1+C_{1\Lambda }\int_0^te^{A_1(t-s)}B_1C_{2\Lambda }e^{A_2s}x_2ds,
\end{equation}
which, together with \dref{2020661413} and
the admissibility of $C_1$ for $e^{A_1 t}$,  leads to
\begin{equation} \label{2020661417}
 \int_0^\tau \left\| \mathcal{C}e^{\A t}(x_1,x_2)^{\top}ds\right\|_{Y_1}^2dt\leq L_{\tau}\|(x_1,x_2) ^{\top}\|^2_{X_1\times X_2}, \ \
 \forall\ \ (x_1,x_2) ^{\top}\in D(\A),
\end{equation}
 where $L_{\tau}$ is a positive constant. Hence,
 $\mathcal{C} $  is  admissible for $e^{\A t}$.

 Proof of (ii). When  $(A_2,B_2,C_{2 })$ is a regular linear system, for any  $ u_2\in L_{\rm loc}^2([0,\infty); U_2)$ and $t\geq0$, it follows that
 \begin{equation} \label{2020661419}
 \int_0^t e^{A_2(t-s)}B_2u_2(s)ds\in X_2 \ \ \mbox{and}\ \
 C_{2\Lambda }\int_0^t e^{A_2(t-s)}B_2u_2(s)ds\in L_{\rm loc}^2([0,\infty); U_1).
\end{equation}
These, together with
the admissibility of $B_1$ for $e^{A_1t}$, lead  to
 \begin{equation} \label{2020661426}
 \int_0^t e^{\A(t-s)}\mathcal{B}u_2(s)ds =\begin{pmatrix}
   \int_0^t e^{A_1(t-\alpha)}B_1[C_{2\Lambda }\int_0^\alpha e^{A_2(\alpha-s)}B_2u_2(s)ds]
                                             d\alpha  \\
                                            \int_0^t e^{A_2(t-s)}B_2u_2(s)ds
                                          \end{pmatrix}\in X_1\times X_2.
\end{equation}
 Hence, $\mathcal{B} $  is  admissible for $e^{\A t}$.

 Proof of (iii).   Since both $(A_1,B_1,C_{1})$  and $(A_2,B_2,C_{2})$ are regular linear systems,
 for any   $\lambda\in  \rho(A_1)\cap \rho(A_2)\subset \rho(\A)$, we  conclude that
 $C_{j\Lambda}(\lambda-\tilde{A}_j)^{-1}B_j\in \mathcal{L}(U_j,Y_j)$ and $\lambda \rightarrow \|C_{j\Lambda}(\lambda-\tilde{A}_j)^{-1}B_j\|$ is bounded on some right half-plane, $j=1,2$. Moreover,
   a simple computation shows that
  \begin{equation} \label{2020661433}
 \begin{array}{ll}
 \disp\mathcal{C}   (\lambda- {\A})^{-1}\mathcal{B} &\disp =
 \mathcal{C} \left[\lambda-
  \begin{pmatrix}
           \tilde{A}_1 &B_1C_{2\Lambda}\\
           0&\tilde{A}_2
         \end{pmatrix}\right]^{-1}\mathcal{B}
 \crr
  &\disp =(C_{1\Lambda},0)
    \begin{pmatrix}
           (\lambda-\tilde{A}_1)^{-1} &(\lambda-\tilde{A}_1)^{-1}B_1C_{2\Lambda}(\lambda-\tilde{A}_2)^{-1}\\
           0&(\lambda-\tilde{A}_2)^{-1}
         \end{pmatrix}
    \begin{pmatrix}
    0 \\
    B_2
  \end{pmatrix}\crr
  &\disp=C_{1\Lambda}
     (\lambda-\tilde{A}_1)^{-1}B_1C_{2\Lambda}(\lambda-\tilde{A}_2)^{-1} B_2.
   \end{array}
  \end{equation}
     Consequently,
    $\mathcal{C}  (\lambda- {\A})^{-1}\mathcal{B} \in \mathcal{L}(U_2,Y_1)$ and
  $\lambda\to \|\mathcal{C} (\lambda-\tilde{\A})^{-1}\mathcal{B}\|$  is bounded on some right
half-plane. By Proposition \ref{Regular},   $(\A,\mathcal{B}, \mathcal{C})$ is a regular linear system.
\end{proof}

As an immediate consequence of  Lemma \ref{Lm2020652112}, we arrive at the following Theorem:
\begin{theorem}\label{Th20207212135}
Suppose that $(A_1,B_1,C_{1 })$  and $(A_2,B_2,C_{2})$  are    regular linear systems. Then,
system \dref{wxh20202131701} is  well-posed:
  For  any     $(x_1(0),x_2(0)) ^{\top}\in X_1\times X_2$  and   $u\in L^2_{\rm loc}([0,\infty);U_2)$, there exists a unique solution $(x_1,x_2)^{\top}\in C([0,+\infty);X_1\times X_2) $ to system \dref{wxh20202131701} such that
   \begin{equation} \label{20207212139}
  \|(x_1(t),x_2(t))^{\top}\|_{X_1\times X_2}+\int_0^t\|y(s)\|_{Y_1}^2ds\leq C_t
  \left[\|(x_1(0),x_2(0))^{\top}\|_{X_1\times X_2}+\int_0^t\|u(s)\|_{U_2}^2ds\right]
\end{equation}
   for any $t>0$, where $C_t>0$ is a  constant that is independent of
    $(x_1(0),x_2(0))  $  and   $u$.

\end{theorem}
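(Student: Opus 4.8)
The plan is to read the conclusion off Lemma~\ref{Lm2020652112}, which already supplies every structural ingredient of a regular (hence well-posed) linear system for the triple $(\A,\mathcal{B},\mathcal{C})$ attached to \dref{wxh20202131701} via \dref{20205121037}--\dref{2020521722}. First I would write the solution of \dref{wxh20202131701} in the abstract variation-of-constants form
\begin{equation*}
z(t)=e^{\A t}z(0)+\int_0^t e^{\A(t-s)}\mathcal{B}u(s)\,ds,\qquad z=(x_1,x_2)^{\top},
\end{equation*}
together with $y(t)=\mathcal{C}z(t)$. By Lemma~\ref{Lm2020652112}, $\A$ generates a $C_0$-semigroup on $X_1\times X_2$, so there are $M\geq1$ and $\omega\in\R$ with $\|e^{\A t}\|\leq Me^{\omega t}$; and since $\mathcal{B}$ is admissible for $e^{\A t}$ by assertion (ii), the input map $u\mapsto\int_0^t e^{\A(t-s)}\mathcal{B}u(s)\,ds$ produces a continuous $X_1\times X_2$-valued trajectory obeying a bound of the form $c_t\|u\|_{L^2([0,t];U_2)}$. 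Combining these two facts by the triangle inequality yields the state part of \dref{20207212139}, and the continuity of $z(\cdot)$ then furnishes a solution in $C([0,\infty);X_1\times X_2)$.

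For the output part I would split $y$ according to the two terms above. The free evolution $\mathcal{C}e^{\A s}z(0)$ is square-integrable on $[0,t]$ with $\int_0^t\|\mathcal{C}e^{\A s}z(0)\|_{Y_1}^2\,ds\leq c_t'\|z(0)\|^2$ precisely because $\mathcal{C}$ is admissible for $e^{\A t}$ by assertion (i). The forced response is controlled by the boundedness of the input-output operator of the system: by assertion (iii) the triple $(\A,\mathcal{B},\mathcal{C})$ is regular, hence well-posed, so the map sending $u$ to the output generated by $\int_0^{\cdot} e^{\A(\cdot-s)}\mathcal{B}u(s)\,ds$ is bounded from $L^2([0,t];U_2)$ into $L^2([0,t];Y_1)$. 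Adding the two square-integrable contributions and absorbing all the $t$-dependent constants into a single $C_t$ gives the output estimate, whence \dref{20207212139}.

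Uniqueness follows from the uniqueness of mild solutions for a $C_0$-semigroup: any two solutions differ by a function solving the homogeneous abstract equation with zero data, which must vanish. The step I expect to demand the most care is the output bound, since admissibility of $\mathcal{C}$ and of $\mathcal{B}$ separately do \emph{not} by themselves control the composite forced response; one genuinely needs the $L^2$-boundedness of the input-output map, which is exactly the content packaged in the regularity of assertion (iii) of Lemma~\ref{Lm2020652112}, namely that the transfer function $\mathcal{C}(\lambda-\A)^{-1}\mathcal{B}$ is bounded on a right half-plane. Once that boundedness is invoked, the theorem is indeed an immediate consequence of the lemma.
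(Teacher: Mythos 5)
Your proposal is correct and takes essentially the same route as the paper: the paper states Theorem \ref{Th20207212135} as an immediate consequence of Lemma \ref{Lm2020652112} (regularity of the triple $(\A,\mathcal{B},\mathcal{C})$), and your argument simply unpacks the standard content of that regularity --- the semigroup bound, admissibility of $\mathcal{B}$ for the state estimate, admissibility of $\mathcal{C}$ for the free output, and the $L^2$-boundedness of the input-output map for the forced output. Your closing observation, that the separate admissibilities do not control the forced response and that the regularity (well-posedness) of the composite system is genuinely needed there, is exactly the point the paper's appeal to Lemma \ref{Lm2020652112}(iii) is silently carrying.
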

\section{Observability of system \dref{wxh20202131701}}\label{Observability}
In this section, we consider the observability of system \dref{wxh20202131701}.
We denote by $\rho_{\infty}(A_1)$ the connected component of $\rho (A_1)$ which
contains some right half-plane.  This  set   has been used  in
\cite[Proposition 2.4.3, p.34]{TucsnakWeiss2009book}.
 Obviously, there is only one such component. In
particular, if $\sigma(A_1)$ is countable, which  is often the case in applications, then
$\rho_{\infty}(A_1)=\rho (A_1)$.
\begin{theorem}\label{Th2020681759}
Let $X_2$ and $Y_1 $ be   finite-dimensional Hilbert spaces and let
$(A_1,B_1,C_1)$  be a regular linear system with  the state space $X_1$,   input space  $U_1$ and the  output   space $Y_1 $.
Suppose that   $A_2\in\mathcal{L}(X_2)$,  $C_2 \in\mathcal{L}(X_2,U_1)$   and
\begin{equation} \label{wxh202078843}
  \sigma(A_2)\subset \rho_{\infty}(A_1) .
\end{equation}
Then,  system  $(\A ,\mathcal{C} )$ is exactly (approximately) observable if and only if system $(A_1,C_1)$ is exactly  (approximately)  observable and
\begin{equation}\label{wxh202079943}
 {\rm Ker} \left[C_{1\Lambda}(\lambda-\tilde{A}_1)^{-1}B_1C_2\right]\cap  {\rm Ker}(\lambda-A_2) =\{0\}, \ \ \forall\ \lambda\in\sigma(A_2),
\end{equation}
where the operators
  $\A$ and $\mathcal{C}$ are  given by \dref{20205121037} and \dref{2020521722}, respectively.
\end{theorem}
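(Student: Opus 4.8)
The plan is to decouple the cascade operator $\A$ of \dref{20205121037} by a bounded invertible transformation and then invoke the invariance of observability under similarity (Lemma \ref{THwxh20202291137}), exactly as in the finite-dimensional motivation of Section \ref{motivation} but now with $F_0=0$: the hypothesis $\sigma(A_2)\subset\rho_{\infty}(A_1)$ already furnishes the spectral separation $\sigma(A_1)\cap\sigma(A_2)=\emptyset$, so no preliminary feedback is needed. Concretely, I would first solve the operator Sylvester equation $\tilde{A}_1S-SA_2=B_1C_2$ for $S\in\mathcal{L}(X_2,X_1)$, whose unique solution admits the Rosenblum/Dunford contour representation $S=\frac{1}{2\pi i}\oint_\Gamma(\lambda-\tilde{A}_1)^{-1}B_1C_2(\lambda-A_2)^{-1}\,d\lambda$ over a contour $\Gamma\subset\rho_{\infty}(A_1)$ encircling the finite set $\sigma(A_2)$. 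Because $X_2$ is finite-dimensional and $(A_1,B_1,C_1)$ is regular, the integrand $(\lambda-\tilde{A}_1)^{-1}B_1$ is bounded into $X_1$ and analytic on $\rho_{\infty}(A_1)$, so $S$ is well defined; moreover $C_{1\Lambda}S$ is meaningful since $C_{1\Lambda}(\lambda-\tilde{A}_1)^{-1}B_1$ exists by regularity.

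With such an $S$, the bounded invertible operator $\mathbb{S}=\bigl(\begin{smallmatrix}I_1&S\\0&I_2\end{smallmatrix}\bigr)$ on $X_1\times X_2$ decouples $\A$, giving $\mathbb{S}\A\mathbb{S}^{-1}=\mathrm{diag}(\tilde{A}_1,A_2)=:\A_d$ on the transported domain $\mathbb{S}D(\A)$, precisely as in \dref{20208111108} with $F_0=0$: the only off-diagonal term $-\tilde{A}_1S+SA_2+B_1C_{2\Lambda}$ vanishes on $X_2$ by the Sylvester equation (recall $C_{2\Lambda}=C_2$ and $\tilde{A}_2=A_2$ since $A_2,C_2$ are bounded). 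A direct computation as in \dref{20208111112} yields $\mathcal{C}=\mathcal{C}_d\mathbb{S}$ with $\mathcal{C}_d=(C_{1\Lambda},-C_{1\Lambda}S)$. Hence $\A\sim_{\mathbb{S}}\A_d$ and $\mathcal{C}=\mathcal{C}_d\mathbb{S}$, so Lemma \ref{THwxh20202291137}(ii) shows that $(\A,\mathcal{C})$ is exactly (approximately) observable if and only if the decoupled pair $(\A_d,\mathcal{C}_d)$ is. This is the step that lets me treat exact and approximate observability simultaneously, since a bounded invertible similarity preserves both notions at once.

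It then remains to characterize observability of the decoupled pair. Using the spectral separation $\sigma(A_2)\subset\rho_{\infty}(A_1)$ and the decoupled-system observability result \cite[Lemma 10.2]{FPart1}, the pair $(\A_d,\mathcal{C}_d)$ is exactly (approximately) observable if and only if both $(\tilde{A}_1,C_{1\Lambda})$ and the finite-dimensional pair $(A_2,C_{1\Lambda}S)$ are exactly (approximately) observable; the former is exactly observability of $(A_1,C_1)$. For the finite-dimensional block I would invoke the Hautus test: $(A_2,C_{1\Lambda}S)$ is observable if and only if $\mathrm{Ker}(C_{1\Lambda}S)\cap\mathrm{Ker}(\lambda-A_2)=\{0\}$ for every $\lambda\in\sigma(A_2)$. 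Applying the Sylvester equation to an eigenvector $x_2$ with $A_2x_2=\lambda x_2$ gives $(\lambda-\tilde{A}_1)Sx_2=-B_1C_2x_2$, i.e. $Sx_2=-(\lambda-\tilde{A}_1)^{-1}B_1C_2x_2$, whence $C_{1\Lambda}Sx_2=-C_{1\Lambda}(\lambda-\tilde{A}_1)^{-1}B_1C_2x_2$ on $\mathrm{Ker}(\lambda-A_2)$. Therefore the Hautus condition for $(A_2,C_{1\Lambda}S)$ is precisely \dref{wxh202079943}, and the two characterizations combine to the asserted equivalence.

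The main technical obstacle is the rigorous treatment of the unbounded control operator $B_1\in\mathcal{L}(U_1,[D(A_1^*)]')$ in the Sylvester equation and in the transformation. One must verify that $S$ genuinely maps $X_2$ into $X_1$ (not merely into the extrapolation space), that $\mathbb{S}$ carries $D(\A)$ onto $D(\A_d)$ so that $\A\sim_{\mathbb{S}}\A_d$ holds in the sense of Definition \ref{WXH2020220943}, and that $\mathrm{Ran}(S)\subset D(C_{1\Lambda})$ so that $\mathcal{C}_d$ and the identity $\mathcal{C}=\mathcal{C}_d\mathbb{S}$ are meaningful. These are the regularity and admissibility bookkeeping points, and they hinge on the finite-dimensionality of $X_2$ together with the regularity of $(A_1,B_1,C_1)$; the contour-integral formula localized at the finite spectrum $\sigma(A_2)\subset\rho_{\infty}(A_1)$ is what renders them tractable. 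I would also confirm the exact and approximate forms of the decoupled-observability lemma borrowed from \cite{FPart1}. As a sanity check on necessity, one may construct directly the eigenvector $(x_1,x_2)$ of $\A$ with $x_1=(\lambda-\tilde{A}_1)^{-1}B_1C_2x_2$ whenever \dref{wxh202079943} fails, and observe that its output $e^{\lambda t}C_{1\Lambda}(\lambda-\tilde{A}_1)^{-1}B_1C_2x_2$ vanishes identically, which already destroys even approximate observability.
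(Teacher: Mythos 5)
Your proposal is correct and follows essentially the same route as the paper's proof: solve the Sylvester equation $\tilde{A}_1S-SA_2=B_1C_2$ under the spectral separation \dref{wxh202078843} (the paper cites \cite[Lemma 4.2]{FPart1}, which plays the role of your Rosenblum contour construction and also supplies $C_{1\Lambda}S\in\mathcal{L}(X_2,Y_1)$), decouple $\A$ with the upper-triangular similarity $\mathbb{S}$, invoke Lemma \ref{THwxh20202291137}, and then apply the finite-dimensional Hautus test to $(A_2,C_{1\Lambda}S)$ via the eigenvector identity $C_{1\Lambda}Sx_2=-C_{1\Lambda}(\lambda-\tilde{A}_1)^{-1}B_1C_2x_2$, which is exactly the paper's computation \dref{2020691443}--\dref{2020521758}. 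The one bookkeeping difference is that the paper obtains the sufficiency direction of the block-diagonal observability equivalence from \cite[Theorem 6.4.2 and Proposition 6.4.5]{TucsnakWeiss2009book} (where \dref{wxh202078843} is essential) and uses \cite[Lemma 10.2]{FPart1} only for necessity, whereas you attribute both directions to the latter---precisely the point you flagged for verification, so this is a citation refinement rather than a gap.
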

\begin{proof}
The assumption \dref{wxh202078843} implies that $\sigma(A_1)\cap \sigma(A_2)=\emptyset$.
 It follows from  \cite[Lemma 4.2]{FPart1}
that the Sylvester equation  ${A}_1P-P{A}_2 =B_1C_{2} $  admits a unique solution  $P\in \mathcal{L}(X_2,X_1)$  such
that $C_{1\Lambda}P\in\mathcal{L}(X_2,Y_1)$ and
\begin{equation} \label{wxh202078842}
  \tilde{A}_1Px_2-P {A}_2x_2 =B_1C_{2}x_2,\ \ \forall\ x_2\in X_2.
\end{equation}
 In terms of the solution $P$,
we introduce  an invertible  transformation
 $\mathbb{P}\in \mathcal{L}( X_1 \times X_2)$
 by
\begin{equation} \label{wxh202089913}
\left.\begin{array}{l}
\disp    \mathbb{P}  \left( {x}_1 , {x}_2 \right)^{\top}= \left( {x}_1+P  {x}_2,\ {x}_2 \right)^{\top},\ \ \forall\ ( {x}_1, {x}_2)^\top\in X_1\times X_2,
 \end{array}\right.
\end{equation}
whose  inverse is given by
\begin{equation} \label{wxh202089914}
\left.\begin{array}{l}
\disp    \mathbb{P}^{-1} \left( {x}_1 , {x}_2 \right)^{\top}= \left(   {x}_1- P  {x}_2, {x}_2\right)^{\top},\ \ \forall \left( {x}_1 , {x}_2 \right)^{\top}\in X_1\times X_2.
 \end{array}\right.
\end{equation}
Define $  \A _{\mathbb{P} }= {\rm diag}( {A}_1,A_2)$ with $D(\A _{\mathbb{P} })=
D(A_1)\times X_2$. Since $X_2$ is finite-dimensional,  we have
\begin{equation} \label{20208232011}
  D(\A )=\left\{(  x_1,x_2 )^{\top} \in X_{1}\times X_{2}  \ \Big{|} \
   \tilde{A}_1x_1 +B_1 C_{2} x_2\in X_1 \right\}.
    \end{equation}
 For any $(x_1,x_2)^{\top}\in D(\A_{\mathbb{P}})$, it follows from \dref{wxh202078842} and the fact $P\in \mathcal{L}(X_2,X_1)$ that
\begin{equation} \label{20208232015}
  \tilde{A}_1(x_1-Px_2) +B_1C_{2}x_2=
  \tilde{A}_1 x_1 -PA_2x_2  \in X_1,
  \end{equation}
which, together with \dref{20208232011} and \dref{wxh202089914}, leads to that
$\mathbb{P}^{-1}(x_1,x_2)^{\top}\in D(\A)$. Hence, $D(\A_{\mathbb{P}})\subset \mathbb{P}D(\A)$
due to the arbitrariness of $(x_1,x_2)$. On the other hand, for any $(x_1,x_2)^{\top}\in D(\A )$,
by \dref{wxh202078842}, \dref{20208232011} and
the fact $P\in \mathcal{L}(X_2,X_1)$, it follows that
\begin{equation} \label{20208232021}
\tilde{A}_1(x_1+Px_2)= \tilde{A}_1 x_1+B_1C_{2}x_2 +\tilde{A}_1Px_2 -B_1C_{2}x_2
 =[\tilde{A}_1 x_1+B_1C_{2}x_2]+PA_2x_2\in X_1,
\end{equation}
which implies that  $\mathbb{P}(x_1,x_2)^{\top}\in D(\A_{\mathbb{P}})$. Hence,
 $\mathbb{P}D(\A)\subset D(\A_{\mathbb{P}})$.
 As a result,   $\mathbb{P}D(\A)= D(\A_{\mathbb{P}})$. Moreover,
  a simple computation shows that
\begin{equation} \label{wxh202078747}
  {\A} \sim_{\mathbb{P}  }  {\A}_{\mathbb{P} }
 \ \ \mbox{and}\ \  \mathcal{C} \mathbb{P} ^{-1} =(C_{1\Lambda},  -C_{1\Lambda}P) .
\end{equation}
By Lemma \ref{THwxh20202291137},
$(\A,\mathcal{C}  ) $ is exactly  (approximately)  observable if and only if
$( {\A}_{\mathbb{P} },\mathcal{C} \mathbb{P} ^{-1} ) $ is exactly  (approximately)  observable.
Now, it  suffices to prove that
$( {\A}_{\mathbb{P} },\mathcal{C} \mathbb{P} ^{-1} )$ is exactly  (approximately)  observable if and  only if $(A_1,C_1)$ is exactly  (approximately)  observable and \dref{wxh202079943} holds.

Actually, for any $\lambda\in\sigma(A_2)$,
since $\lambda\notin\sigma(A_1)$,
it follows from \dref{wxh202078842} that
  \begin{equation} \label{2020691443}
P =(\lambda-\tilde{A}_1)^{-1}P(\lambda-A_2) -(\lambda-\tilde{A}_1)^{-1}B_1C_2.
\end{equation}
 Suppose that   $A_2x_2= \lambda x_2$, $x_2\in X_2$.  Then, \dref{2020691443} yields
 \begin{equation} \label{2020521758}
 -C_{1\Lambda}P  x_2=  C_{1\Lambda}(\lambda-\tilde{A}_1)^{-1}B_1C_2x_2.
\end{equation}
 By \cite[Remark 1.5.2, p.15]{TucsnakWeiss2009book}, $(A_2,  -C_{1\Lambda}P  ) $  is observable if and only if
 \begin{equation}\label{20208232033}
 {\rm Ker} \left(C_{1\Lambda}P \right)\cap  {\rm Ker}(\lambda-A_2) =\{0\}, \ \ \forall\ \lambda\in\sigma(A_2) .
\end{equation}
Combining  \dref{2020521758} and \dref{20208232033}, we conclude that  $(A_2,  -C_{1\Lambda}P  ) $  is observable if and only if  \dref{wxh202079943} holds.

When  $(\tilde{A}_1,C_{1\Lambda})$ is exactly  (approximately)  observable and \dref{wxh202079943} holds true,
then $(A_2,  -C_{1\Lambda}P  ) $ is observable.  Furthermore,
by  \dref{wxh202078843} and
\cite[Theorem 6.4.2, p.190]{TucsnakWeiss2009book}
(\cite[Proposition 6.4.5, p.192]{TucsnakWeiss2009book}),
$(  \A _{\mathbb{P}},\mathcal{C} \mathbb{P} ^{-1} ) $ is exactly  (approximately)  observable.
 Conversely,
if $(  \A_{\mathbb{P}  },\mathcal{C} \mathbb{P} ^{-1} ) $  is exactly  (approximately)  observable,
by the block-diagonal structure of $\A_{\mathbb{P}  }$ and \dref{wxh202078843},    it is easily to obtain that  $( {A}_1,C_{1 })$ is exactly  (approximately)  observable and $(A_2,   -C_{1\Lambda}P  ) $ is observable (see, e.g., \cite[Lemma 10.2]{FPart1}). In particular,   \dref{wxh202079943} holds.
The proof is complete.
\end{proof}

\begin{remark} \label{Re2020691451}
When system $(A_1,B_1,C_1)$ is single-input-single-output and system  $(A_2,C_2)$ is observable,
the assumption  \dref{wxh202079943} can be replaced by
 \begin{equation}\label{2020691458}
 C_{1\Lambda}(\lambda-\tilde{A}_1)^{-1}B_1 \neq 0,\ \ \lambda\in\sigma(A_2).
\end{equation}
Indeed,
suppose that
$x_2\in {\rm Ker} \left[C_{1\Lambda}(\lambda-\tilde{A}_1)^{-1}B_1C_2\right]\cap  {\rm Ker}(\lambda-A_2)$ for some  $\lambda\in\sigma(A_2)$.
By  \dref{2020691458} and the observability of $(A_2,C_2)$, we obtain  $C_2x_2=0$ and hence
  $x_2=0$. This yields  \dref{wxh202079943}.
   The condition \dref{2020691458} implies that every  point in $\sigma(A_2)$ is not the   transmission zero of system $(A_1,B_1,C_1)$.
\end{remark}

\section{Luenberger-like observer  for  abstract linear system}\label{Observer1}
In this section, we extend the results in  Section 2 from  finite-dimensional systems to     infinite dimensional ones.
Inspired by the finite-dimensional observer \dref{wxh20202061740}, an
infinite-dimensional Luenberger-like observer of \dref{wxh20202131701}  is   designed as
\begin{equation} \label{wxh202020617402}
 \left\{\begin{array}{l}
\disp \dot{\hat{x}}_1(t) =  A_1\hat{x}_1(t)+B_1C_{2 }\hat{x}_2(t)-F_1[y(t)- C_{1}\hat{x}_1(t)],\crr
\disp  \dot{\hat{x}}_2(t) =A_2\hat{x}_2(t)+F_2[y(t)-C_{1}\hat{x}_1(t)]+B_2u(t),
\end{array}\right.
\end{equation}
where the tuning  gain  operators  $F_1$ and $F_2$ are selected by the following scheme
\begin{itemize} \label{Scheme20207221434}
   \item Find $F_0\in \mathcal{L}(Y_1,[ D (A_1^*)]')$  to detect  system $(A_1, C_1)$  exponentially;

   \item Solve the following  Sylvester operator  equation:
   \begin{equation} \label{20204181538}
 (A_1+F_0C_{1\Lambda} ) S   - S   {A}_2 =   B_1C_{2 } ;
   \end{equation}

   \item Find $F_2\in \mathcal{L}(Y_1,[ D (A_2^*)]')$   to detect  system $(A_2, C_{1 \Lambda }S)$ exponentially;

   \item Set  $F_1=F_0+   {S}F_2$.

\end{itemize}

\begin{lemma}\label{wxh20206291110}
Suppose that $X_1$, $X_2$, $U_1$,   $Y_1$ are  Hilbert spaces and  $A_j: D(A_j)\subset X_j\to X_j$
 is a densely defined  operator with $\rho(A_j)\neq \emptyset$, $j=1,2$.
Suppose that $B_1\in \mathcal{L} ( U_1 , [D(A_1^*)]')$,
$C_1\in \mathcal{L} (D(A_1),Y_1)$, $C_2\in \mathcal{L} (D(A_2),U_1)$,
$F_0\in \mathcal{L}(Y_1,[ D (A_1^*)]')$,  $F_2\in \mathcal{L}(Y_1,[ D (A_2^*)]')$ and
$S  \in \mathcal{L}(X_2, X_1)$  solves the  Sylvester equation \dref{20204181538} in the sense that
 \begin{equation} \label{20207222100}
  (\tilde{A}_1+F_0C_{1\Lambda} )S x_2  - S   {A}_2x_2 =   B_1C_{2 }x_2,\ \ \forall \  x_2\in D(A_2).
\end{equation}
Then, the following assertions hold true:

(i)  If  $(A_1,B_1,C_1)$  is a regular linear system, then,  $C_{1\Lambda}S\in \mathcal{L}(D(A_2), Y_1) $;

(ii)  If the systems  $(A_2, F_2, C_{2} )$ and $(A_2, F_2, C_{1 \Lambda }S)$ are regular, then  there exists an extension of $S$, still denoted by $S$,  such that $ S F_2\in \mathcal{L}(Y_1, [D(A_1^*)]') $ and
\begin{equation} \label{20207222102}
   (\tilde{A}_1+F_0C_{1\Lambda} )  Sx_2 -  { S}   \tilde{A}_2x_2  =   B_1C_{2\Lambda} x_2 ,\ \ \forall\ \ x_2\in  X_{2F_2},
\end{equation}
where
\begin{equation}\label{20207222154}
 X_{2F_2}=D(A_2)+(\beta -\tilde{A}_2)^{-1}F_2U_2,\ \ \beta \in \rho(A_2) .
 \end{equation}
\end{lemma}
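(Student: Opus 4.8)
The plan is to route everything through the closed-loop generator $A_1^{F_0}:=\tilde A_1+F_0C_{1\Lambda}$ and the regular–system calculus. Since $F_0$ is chosen to detect $(A_1,C_1)$ exponentially (the standing requirement in the design scheme), $I$ is an admissible feedback operator for $C_{1\Lambda}(s-\tilde A_1)^{-1}F_0$; hence, for $\beta$ in a suitable right half-plane contained in $\rho(A_1)\cap\rho(A_2)$, the operator $G_\beta:=C_{1\Lambda}(\beta-\tilde A_1)^{-1}F_0\in\mathcal{L}(Y_1)$ is well defined and $I-G_\beta$ is boundedly invertible. This invertibility is exactly what lets me ``solve'' the implicit $\Lambda$-extension that enters through the $F_0C_{1\Lambda}S$ term.

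For (i) I would fix such a $\beta$ and rewrite \dref{20207222100} as
\[
Sx_2=(\beta-\tilde A_1)^{-1}(\beta Sx_2-SA_2x_2)-(\beta-\tilde A_1)^{-1}B_1C_2x_2+(\beta-\tilde A_1)^{-1}F_0\,C_{1\Lambda}Sx_2,\quad x_2\in D(A_2).
\]
The first summand lies in $D(A_1)\subset D(C_{1\Lambda})$; the second lies in $D(C_{1\Lambda})$ with $C_{1\Lambda}(\beta-\tilde A_1)^{-1}B_1$ bounded because $(A_1,B_1,C_1)$ is regular; and since the mere meaningfulness of \dref{20207222100} forces $Sx_2\in D(C_{1\Lambda})$, the last summand is in $D(C_{1\Lambda})$ as well. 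Applying $C_{1\Lambda}$ yields $(I-G_\beta)C_{1\Lambda}Sx_2=h(x_2)$, where $h(x_2)$ collects the first two terms and satisfies $\|h(x_2)\|_{Y_1}\le M\|x_2\|_{D(A_2)}$ thanks to $\|A_2x_2\|_{X_2}+\|C_2x_2\|_{U_1}\lesssim\|x_2\|_{D(A_2)}$. Inverting $I-G_\beta$ gives $C_{1\Lambda}Sx_2=(I-G_\beta)^{-1}h(x_2)$, hence $C_{1\Lambda}S\in\mathcal{L}(D(A_2),Y_1)$. Equivalently, one may observe that $(A_1^{F_0},B_1,C_1)$ is again a regular linear system with the same $\Lambda$-extension and extrapolation space, so \dref{20207222100} is a standard Sylvester equation and (i) is precisely the conclusion of \cite[Lemma 4.2]{FPart1}, as already used in the proof of Theorem \ref{Th2020681759}.

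For (ii) I would construct the extension explicitly. Fix $\beta\in\rho(A_2)$; the elements adjoined in \dref{20207222154} are $\xi_w:=(\beta-\tilde A_2)^{-1}F_2w$, which already belong to $X_2$, so $w\mapsto S\xi_w=S(\beta-\tilde A_2)^{-1}F_2w$ is a bounded map $Y_1\to X_1$. I then \emph{define}
\[
SF_2:=(\beta-\tilde A_1^{F_0})\,S(\beta-\tilde A_2)^{-1}F_2+B_1\,C_{2\Lambda}(\beta-\tilde A_2)^{-1}F_2.
\]
Regularity of $(A_2,F_2,C_2)$ makes $C_{2\Lambda}(\beta-\tilde A_2)^{-1}F_2\in\mathcal{L}(Y_1,U_1)$, so the second summand maps into $[D(A_1^*)]'$; the first does too because $\beta-\tilde A_1^{F_0}\in\mathcal{L}(X_1,[D(A_1^*)]')$. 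Hence $SF_2\in\mathcal{L}(Y_1,[D(A_1^*)]')$, as claimed. To verify \dref{20207222102} on $X_{2F_2}$ I write $x_2=\phi+\xi_w$ with $\phi\in D(A_2)$: on $\phi$ the identity is \dref{20207222100} with $C_{2\Lambda}\phi=C_2\phi$, and on $\xi_w$ I use $\tilde A_2\xi_w=\beta\xi_w-F_2w$ to reduce the left-hand side to $-(\beta-\tilde A_1^{F_0})S\xi_w+SF_2w$, which collapses to $B_1C_{2\Lambda}\xi_w$ by the definition of $SF_2$, matching the right-hand side. A short consistency check is needed when $\xi_w$ happens to lie in $D(A_2)$, to see that the extended $SF_2$ agrees with the original $S\in\mathcal{L}(X_2,X_1)$.

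The step I expect to be the main obstacle is giving meaning to $F_0C_{1\Lambda}Sx_2$ in \dref{20207222102} for $x_2\in X_{2F_2}\setminus D(A_2)$: there $Sx_2\notin D(A_1)$, and one must both guarantee $Sx_2\in D(C_{1\Lambda})$ and identify $C_{1\Lambda}(S\xi_w)$ with $(C_{1\Lambda}S)_\Lambda\xi_w$. This is precisely the role of the hypothesis that $(A_2,F_2,C_{1\Lambda}S)$ be regular: it supplies $\xi_w\in D((C_{1\Lambda}S)_\Lambda)$ and the boundedness of $(C_{1\Lambda}S)_\Lambda(\beta-\tilde A_2)^{-1}F_2$. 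The delicate point is the interchange of the two defining limits $\lim_{\mu}C_1\mu(\mu-A_1)^{-1}$ and $\lim_{\lambda}(\,\cdot\,)\lambda(\lambda-A_2)^{-1}$, which I would justify from the uniform resolvent bounds furnished by regularity together with the resolvent form of the Sylvester identity; once this interchange is secured, the remaining algebra of the extension is routine.
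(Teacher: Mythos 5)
Your part (i) contains a genuine gap relative to the lemma as stated. Your argument hinges on $G_\beta=C_{1\Lambda}(\beta-\tilde A_1)^{-1}F_0$ being a bounded operator on $Y_1$ with $I-G_\beta$ boundedly invertible, and you justify this by importing the design scheme's standing requirement that $F_0$ detect $(A_1,C_1)$ exponentially. But the lemma assumes only $F_0\in\mathcal{L}(Y_1,[D(A_1^*)]')$: nothing guarantees $(\beta-\tilde A_1)^{-1}F_0Y_1\subset D(C_{1\Lambda})$, let alone boundedness of $G_\beta$ or invertibility of $I-G_\beta$, so your derivation of the bound on $C_{1\Lambda}Sx_2$ fails in the stated generality. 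Your fallbacks do not rescue it: closed-loop regularity of $(\tilde A_1+F_0C_{1\Lambda},B_1,C_1)$ ``with the same extrapolation space'' requires the same detection hypothesis (and the extrapolation spaces of $A_1$ and of the closed loop need not coincide), and \cite[Lemma 4.2]{FPart1} is invoked in this paper only under spectral separation with finite-dimensional $X_2$, whereas here $S$ is \emph{given} as a hypothesis with neither assumption. The paper's proof needs no condition on $F_0$ at all: for $F_0\neq0$ the mere meaningfulness of \dref{20207222100} already forces $Sx_2\in D(C_{1\Lambda})$ (for $F_0=0$ one gets $Sx_2\in D(A_1)+(\alpha-\tilde A_1)^{-1}B_1U_1\subset D(C_{1\Lambda})$ by regularity of $(A_1,B_1,C_1)$); then, since $S\in\mathcal{L}(D(A_2),X_1)$ and $D(C_{1\Lambda})\hookrightarrow X_1$ continuously, $S$ is closed as a map from $D(A_2)$ into the Banach space $D(C_{1\Lambda})$, and the \emph{closed graph theorem} gives $C_{1\Lambda}S\in\mathcal{L}(D(A_2),D(C_{1\Lambda}))$, hence $C_{1\Lambda}S\in\mathcal{L}(D(A_2),Y_1)$. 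This qualitative step is the missing idea that replaces your quantitative feedback estimate; note that your resolvent rewriting does work verbatim when $F_0=0$ (where $G_\beta=0$), and it is precisely the general-$F_0$ case where your route breaks.

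Your part (ii) is, after regrouping, the paper's own construction: expanding $\beta-(\tilde A_1+F_0C_{1\Lambda})=(\beta-\tilde A_1)-F_0C_{1\Lambda}$, your formula for $SF_2$ is exactly the paper's extension $\tilde S=B_1C_{2\Lambda}(\beta-\tilde A_2)^{-1}+(\beta-\tilde A_1)S(\beta-\tilde A_2)^{-1}-F_0C_{1\Lambda}S(\beta-\tilde A_2)^{-1}$ applied to $F_2$; your verification on $\xi_w$ is the paper's computation \dref{20205281647Ad722}; and the ``consistency check'' you defer is precisely \dref{2020629858Ad722}, which shows $\tilde S|_{X_2}=S$ via the Sylvester identity. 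Two caveats. First, as written, ``$\beta-(\tilde A_1+F_0C_{1\Lambda})\in\mathcal{L}(X_1,[D(A_1^*)]')$'' is incorrect, since $F_0C_{1\Lambda}$ is not defined on all of $X_1$; boundedness into $[D(A_1^*)]'$ must be read termwise in the three-term grouping, with the term $F_0C_{1\Lambda}S(\beta-\tilde A_2)^{-1}F_2$ handled by part (i) together with the regularity of $(A_2,F_2,C_{1\Lambda}S)$, as the paper does. Second, the ``interchange of limits'' you single out as the main obstacle is not actually needed: the paper reads $C_{1\Lambda}S$ on $X_{2F_2}$ through the $\Lambda$-extension (with respect to $A_2$) of the composite operator $C_{1\Lambda}S\in\mathcal{L}(D(A_2),Y_1)$ --- exactly what the regularity of $(A_2,F_2,C_{1\Lambda}S)$ supplies --- after which \dref{20207222102} is a purely algebraic identity; your stronger literal identification $C_{1\Lambda}(S\xi_w)=(C_{1\Lambda}S)_{\Lambda}\xi_w$ is more than the lemma uses, and your sketch of it via uniform resolvent bounds is left unproven.
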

\begin{proof}
 (i)  For any  $x_2\in D(A_2)$, the assumption
 \dref{20207222100} yields $Sx_2\in D(C_{1\Lambda}) $  directly, provided
   $F_0\neq0$.
 When $F_0=0$,    it follows from  \dref{20207222100}
  that    $\tilde{A}_1 S x_2  -B_1C_{2 }x_2=S   {A}_2x_2 \in X_1 $, which  implies that $Sx_2\in D(A_1)+ (\alpha-\tilde{A}_1) ^{-1}B_1U_1$ with $\alpha\in \rho(A_1)$. Since
 $(A_1,B_1,C_1)$  is a regular linear system, $D(A_1)+ (\alpha-\tilde{A}_1) ^{-1}B_1U_1\subset D(C_{1\Lambda})$. Therefore, $Sx_2\in D(C_{1\Lambda})$ and
 $S(D(A_2))\subset  D(C_{1\Lambda})$.
  Since $S\in \mathcal{L}(D(A_2),X_1)$ and
  $D(C_{1\Lambda})$ is continuously embedded in $X_1$,
  the   inclusion $S(D(A_2))\subset  D(C_{1\Lambda})$
   implies that $S $    is a closed operator from $D(A_2)$  to $D(C_{1\Lambda})$.
 By the closed graph theorem, $S\in \mathcal{L}(D(A_2),D(C_{1\Lambda}))$
   and hence $C_{1\Lambda}S\in \mathcal{L}(D(A_2), Y_1) $.

(ii)  In terms of the solution  $S  \in \mathcal{L}(X_2, X_1)$  of  \dref{20207222100},
we define the operator $\tilde{S}$ by
   \begin{equation} \label{2020629855Ad722}
   \tilde{S}=   B_1C_{2\Lambda}({\beta }-\tilde{A}_2)^{-1}
   +({\beta }- \tilde{A}_1)S({\beta }-\tilde{A}_2)^{-1}-F_0C_{1\Lambda}S({\beta }-\tilde{A}_2)^{-1},\ \ \beta  \in \rho(A_2).
  \end{equation}
For  any $x_2\in X_2$,  since  $({\beta }-\tilde{A}_2)^{-1}x_2\in D(A_2)$, it follows from \dref{2020629855Ad722} and  \dref{20207222100}  that
\begin{equation} \label{2020629858Ad722}
 \begin{array}{ll}
 \disp   \tilde{S}x_2&\disp =   B_1C_{2\Lambda}({\beta }-\tilde{A}_2)^{-1}x_2+({\beta }- \tilde{A}_1)S({\beta }-\tilde{A}_2)^{-1}x_2-F_0C_{1\Lambda}S({\beta }-\tilde{A}_2)^{-1}x_2\crr
 &\disp =   -S  \tilde{A}_2({\beta }-\tilde{A}_2)^{-1}x_2+S{\beta } ({\beta }-\tilde{A}_2)^{-1}x_2 \crr
 &\disp =S({\beta }-\tilde{A}_2) ({\beta }-\tilde{A}_2)^{-1}x_2=Sx_2,
\end{array}
\end{equation}
which implies that $\tilde{S} $ is an extension of $S$.   On the other hand,     by \dref{2020629855Ad722} and the  regularity of      $(A_2,F_2,C_2)$ and  $(A_2, F_2,C_1S)$, we can conclude that
\begin{equation} \label{2020629924Ad722}
     \begin{array}{l}
 \disp   \tilde{S} F_2
 =B_1C_{2\Lambda}({\beta }-\tilde{A}_2)^{-1}F_2 +({\beta }-\tilde{A}_1) S({\beta }-\tilde{A}_2)^{-1}F_2 -F_0C_{1\Lambda}S({\beta }-\tilde{A}_2)^{-1}F_2,
 \end{array}
 \end{equation}
which implies that $\tilde{S}F_2\in \mathcal{L}(Y_1, [D(A_1^*)]')$.
Moreover, for any $y_1\in Y_1$, if we let $x_{{\beta }}=({\beta }-\tilde{A}_2)^{-1}F_2y_1$, then, it follows from \dref{2020629924Ad722} and \dref{2020629858Ad722} that
\begin{equation} \label{20205281647Ad722}
         \begin{array}{l}
\disp (\tilde{A}_1+F_0C_{1\Lambda} )  \tilde{S} x_{{\beta }}  -
 B_1C_{2\Lambda}x_{{\beta }} = {\beta }\tilde{S} x_{{\beta }}-\tilde{S}F_2y_1
={\beta }\tilde{S} x_{{\beta }}-\tilde{S} ({\beta }-\tilde{A}_2) x_{{\beta }} =\tilde{S}\tilde{A}_2x_{{\beta }},
\end{array}
\end{equation}
which implies that $\tilde{S}$  solves  the Sylvester equation \dref{20204181538} on $({\beta }-\tilde{A}_2)^{-1}F_2Y_1$.
Since  $\tilde{S}|_{X_2}=S$, \dref{20207222100}  and \dref{20207222154}, we can obtain  \dref{20207222102} easily with the replacement of   $S$ by  $\tilde{ S}$.
\end{proof}
\begin{theorem}\label{wxhTh20202211035}
Let  $(A_j,B_j,C_j)$  be a regular linear system with the state space $X_j$,  input  space $U_j$ and the output space $Y_j$,  $j=1, 2$.
Suppose that $U_1=Y_2$,  $F_0\in \mathcal{L}(Y_1,[ D (A_1^*)]')$ detects system $(A_1, C_1)$ exponentially, $F_2\in \mathcal{L}(Y_1,[ D (A_2^*)]')$  detects system $(A_2, C_{1 \Lambda }S)$  exponentially, $(A_2,F_2,C_2)$ is a regular linear system  and $F_1=F_0+   {S}F_2$, where
$S\in \mathcal{L}(X_2,X_1)$  is the solution  of    Sylvester equation \dref{20204181538}
in the sense of  \dref{20207222100}.
Then,   the observer \dref{wxh202020617402}  of system \dref{wxh20202131701} is well-posed:
For any $(\hat{x}_1(0),\hat{x}_2(0))^\top\in X_1\times X_2$ and $u\in L^2_{\rm loc}([0,\infty);U_2)$,
the observer \dref{wxh202020617402}    admits a unique solution $(\hat{x}_1,\hat{x}_2)^\top\in C([0,\infty);X_1\times X_2)$  such that
\begin{equation} \label{wxh202020618082}
e^{\omega t} \|(x_1(t)-\hat{x}_1(t), x_2(t)-\hat{x}_2(t))^\top\|_{X_1\times X_2}\to  0 \ \ \mbox{as}\ \ t\to\infty,
\end{equation}
where $\omega$  is a positive constant that is independent of $t$.
\end{theorem}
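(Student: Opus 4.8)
The plan is to reduce the assertion to the exponential stability of the observation error semigroup, and then to decouple the error system, by the similarity transformation built from the solution $S$ of \dref{20204181538}, into a cascade of two blocks each made exponentially stable by the two detection hypotheses. First I would record the error dynamics. Setting $\tilde{x}_j=x_j-\hat{x}_j$ and using $y(t)-C_1\hat{x}_1(t)=C_{1\Lambda}\tilde{x}_1(t)$, the control term $B_2u$ cancels, so the error obeys the homogeneous abstract equation $\dot{\tilde{x}}=\A_F\tilde{x}$ with
\[
\A_F=\begin{pmatrix}\tilde{A}_1+F_1C_{1\Lambda}&B_1C_{2\Lambda}\\-F_2C_{1\Lambda}&\tilde{A}_2\end{pmatrix}.
\]
Since the plant is well-posed by Theorem \ref{Th20207212135} and (as argued below) $\A_F$ generates a $C_0$-semigroup on $X_1\times X_2$, the observer state $\hat{x}=x-\tilde{x}$ lies in $C([0,\infty);X_1\times X_2)$; this settles well-posedness, and the target estimate \dref{wxh202020618082} is precisely the statement that $e^{\A_F t}$ is exponentially stable.

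Next I would decouple the error system. Introduce the bounded, boundedly invertible transformation $\mathbb{P}(x_1,x_2)^{\top}=(x_1+Sx_2,\,x_2)^{\top}$, built from $S$ and, for the unbounded terms, its extension $\tilde{S}$ supplied by Lemma \ref{wxh20206291110}. Invoking the extended Sylvester identity \dref{20207222102}, the $(1,2)$ off-diagonal block cancels and $\A_F\sim_{\mathbb{P}}\A_{F,\mathbb{P}}$ with
\[
\A_{F,\mathbb{P}}=\begin{pmatrix}A_1+F_0C_{1\Lambda}&0\\-F_2C_{1\Lambda}&A_2+F_2C_{1\Lambda}S\end{pmatrix},
\]
exactly mirroring the finite-dimensional computation \dref{wxh20202062100}. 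Because $\mathbb{P},\mathbb{P}^{-1}\in\mathcal{L}(X_1\times X_2)$, exponential stability of $e^{\A_{F,\mathbb{P}}t}$ is equivalent to that of $e^{\A_F t}$. Here Lemma \ref{wxh20206291110}(i) is used to guarantee $C_{1\Lambda}S\in\mathcal{L}(D(A_2),Y_1)$, so that the $(2,2)$ block and the detection hypothesis on $(A_2,C_{1\Lambda}S)$ are meaningful, while generation of the semigroup $e^{\A_{F,\mathbb{P}}t}$ follows from the admissibility bookkeeping of the triangular cascade, exactly as in the proof of Lemma \ref{Lm2020652112}.

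It then remains to propagate stability through the cascade. Writing $\mathbb{A}_1:=A_1+F_0C_{1\Lambda}$ and $\mathbb{A}_2:=A_2+F_2C_{1\Lambda}S$, the transformed variables $z_1=\tilde{x}_1+S\tilde{x}_2$, $z_2=\tilde{x}_2$ satisfy
\[
\dot{z}_1=\mathbb{A}_1z_1,\qquad \dot{z}_2=\mathbb{A}_2z_2-F_2C_{1\Lambda}z_1.
\]
By Definition \ref{De20204191058}(c), $\mathbb{A}_1$ is exponentially stable since $F_0$ detects $(A_1,C_1)$, and $\mathbb{A}_2$ is exponentially stable since $F_2$ detects $(A_2,C_{1\Lambda}S)$. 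Hence $z_1(t)=e^{\mathbb{A}_1 t}z_1(0)$ decays exponentially, and since $C_1$ is admissible for $e^{\mathbb{A}_1 t}$ the signal $C_{1\Lambda}z_1$ lies in $L^2(0,\infty;Y_1)$ with an exponentially weighted bound. Feeding this through the admissible control operator $F_2$ and the exponentially stable semigroup $e^{\mathbb{A}_2 t}$, the variation-of-parameters representation
\[
z_2(t)=e^{\mathbb{A}_2 t}z_2(0)-\int_0^t e^{\mathbb{A}_2(t-s)}F_2C_{1\Lambda}e^{\mathbb{A}_1 s}z_1(0)\,ds
\]
shows $z_2$ decays exponentially as well; transferring back by $\mathbb{P}^{-1}$ yields \dref{wxh202020618082}.

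The hard part will be this last cascade step: the coupling $-F_2C_{1\Lambda}$ is unbounded on both sides, so it cannot be treated as a bounded perturbation. I expect to handle it inside the regular-linear-systems framework, verifying that the regularity of $(A_2,F_2,C_{1\Lambda}S)$ together with admissibility of $C_1$ for the stable semigroup $e^{\mathbb{A}_1 t}$ makes the above convolution well-defined and, crucially, exponentially decaying. Concretely, I would insert an exponential weight $e^{\omega\cdot}$ (with $\omega$ smaller than the decay rates of both $\mathbb{A}_1$ and $\mathbb{A}_2$) and re-run the admissibility estimates used for $\mathcal{C}$ and $\mathcal{B}$ in the proof of Lemma \ref{Lm2020652112}, thereby upgrading the finite-horizon admissibility bounds to a global exponential-decay bound on $z_2$.
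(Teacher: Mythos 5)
Your proposal follows the paper's proof essentially step for step: the same homogeneous error system (with $B_2u$ cancelling), the same similarity transformation $(x_1,x_2)^\top\mapsto(x_1+Sx_2,x_2)^\top$ justified via Lemma \ref{wxh20206291110} and the extended Sylvester identity \dref{20207222102}, the same lower-triangular target operator with diagonal blocks $\tilde{A}_1+F_0C_{1\Lambda}$ and $\tilde{A}_2+F_2C_{1\Lambda}S$ made exponentially stable by the two detection hypotheses, and the same transfer back to the observer solution. The only divergence is at the final cascade step: the exponential stability of the triangular operator with the admissible unbounded coupling $-F_2C_{1\Lambda}$, which you propose to establish inline by weighted variation-of-parameters and admissibility estimates, is in the paper delegated to the citation \cite[Lemma 3.3]{FPart1}; your sketch is the standard proof of exactly that lemma, so the approaches coincide in substance.
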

\begin{proof}
By Theorem \ref{Th20207212135},    for any
$(  {x}_1(0),  {x}_2(0) )^\top\in  X_1\times X_2 $ and $u \in L^2_{\rm loc}([0,\infty);U_2)$,
system \dref{wxh20202131701}    admits a unique solution $( x_1,x_2 )^\top\in C([0,\infty); X_1\times X_2 )$ such that $y= {C}_{1\Lambda}x_1\in  L^2_{\rm loc}([0,\infty);Y_1)$.
Set the errors
\begin{equation} \label{2020661506}
\tilde{x}_i(t)=x_i(t)-\hat{x}_i(t),\ \ i=1,2,
\end{equation}
which  are  governed by
\begin{equation} \label{2020661504}
 \left\{\begin{array}{l}
\disp \dot{\tilde{x}}_1(t) = (A_1+F_1C_{1 })\tilde{x}_1(t)+B_1C_{2 } \tilde{x}_2(t),\crr
\disp  \dot{\tilde{x}}_2(t) =  A_2 \tilde{x}_2(t)-F_2C_{1 }\tilde{x}_1(t).
\end{array}\right.
\end{equation}
System \dref{2020661504} can be written as
\begin{equation} \label{2020660515}
\frac{d}{dt}(\tilde{x}_1(t),\tilde{x}_2(t))^{\top}
= \mathscr{A}(\tilde{x}_1(t),\tilde{x}_2(t))^{\top} ,
\end{equation}
where
\begin{equation} \label{2020661507}
  \left\{\begin{array}{l}
 \disp  {\mathscr{A}}= \begin{pmatrix}
 \tilde{A}_1+F_1C_{1\Lambda}  &B_1C_{2\Lambda} \\
-F_2C_{1\Lambda} &  \tilde{A}_2
\end{pmatrix},\crr
\disp  D( {\mathscr{A}})= \left \{ \begin{pmatrix}
                                     x _1   \\
                                       x _2
                                   \end{pmatrix}  \in X_1\times X_2 \ \Big{|}\
                                   \begin{array}{l}
\disp (\tilde{A}_1+F_1C_{1\Lambda})  x _1+B_1C_{2\Lambda} x _2 \in X_1\\ \disp\tilde{A}_2 x _2-F_2C_{1\Lambda} x _1 \in X_2\end{array}\right \} .
\end{array}\right.
\end{equation}
In terms of the solution $S$ of   the Sylvester equation \dref{20204181538},
we introduce  a  bounded invertible  transformation $\mathbb{S}\in \mathcal{L}( X_1\times X_2)$  by
\begin{equation} \label{wxh20206291210}
\left.\begin{array}{l}
\disp    \mathbb{S}  \left( {x}_1 , {x}_2 \right)^{\top}= \left( {x}_1+S  {x}_2,\ {x}_2 \right)^{\top},\ \ \forall\ ( {x}_1, {x}_2)^\top\in X_1\times X_2,
 \end{array}\right.
\end{equation}
whose inverse is given by
\begin{equation} \label{wxh20206291211}
\left.\begin{array}{l}
\disp    \mathbb{S}^{-1} \left( {x}_1 , {x}_2 \right)^{\top}= \left(   {x}_1- S  {x}_2, {x}_2\right)^{\top},\ \ \forall \left( {x}_1 , {x}_2 \right)^{\top}\in X_1\times X_2.
 \end{array}\right.
\end{equation}
By virtue of Lemma \ref{wxh20206291110},    $C_{1\Lambda}S\in \mathcal{L}(D(A_2),Y_1)$,   $SF_2\in \mathcal{L}(Y_1,[ D (A_1^*)]')$ and \dref{20207222102} holds.
Since  system $(A_2,F_2,C_{1\Lambda}S)$ is regular, it follows that
 $C_{1\Lambda}S(\beta-\tilde{A}_2)^{-1}F_2\in \mathcal{L}(Y_1)$ for any $\beta\in \rho(A_2)$,
 which, together with \dref{20207222154} and the fact  $C_{1\Lambda}S\in \mathcal{L}(D(A_2),Y_1)$, implies that  $C_{1\Lambda}S (X_{2F_2})\subset  Y_1$.

 Define
\begin{equation} \label{2020429957AD510}
\mathscr{A}_\mathbb{S}= \begin{pmatrix}
\tilde{A}_1+F_0C_{1\Lambda} & 0\\
-F_2C_{1\Lambda}& \tilde{A}_2+F_2C_{1\Lambda}S
\end{pmatrix}
 \end{equation}
 with
  \begin{equation} \label{2020429956AD510}
\disp  D(\mathscr{A}_\mathbb{S})  =
\left\{\begin{pmatrix}
x_1\\x_2\end{pmatrix} \in X_1 \times X_{2}\ \Big{|}\
\begin{array}{l}
 \disp (\tilde{A}_1+F_0C_{1\Lambda})x_1\in X_1\\
\disp (\tilde{A}_2+F_2C_{1\Lambda}S)x_2-F_2C_{1\Lambda}x_1 \in X_2
\end{array}
  \right\}.
 \end{equation}
We claim that  $\mathscr{A} \sim_{\mathbb{S}} \mathscr{A}_\mathbb{S}$. Indeed, for any $( x _1, x _2)^\top\in D(\mathscr{A})$,
since $X_{2F_2}$ defined by \dref{20207222154} can be
  characterized as  (\cite[Remark 7.3]{Weiss1994MCSS}):
\begin{equation} \label{2020723853}
 X_{2F_2}=\left\{ x_2\in X_2\ | \  \tilde{A}_2x_2+F_2y_1  \in X_2, \ y_1 \in Y_1 \right\},
\end{equation}
$( x _1, x _2)^\top\in D(\mathscr{A})$ implies that
 $ x _2\in X_{2F_2}$ and hence $C_{1\Lambda}S x_2\in  Y_1$.
 By \dref{2020661507}, we have
\begin{equation}\label{wxh20206301652}
(\tilde{A}_2+F_2 C_{1\Lambda}S ) x _2-F_2C_{1\Lambda}( x _1+S x _2) = \tilde{A}_2 x _2-F_2C_{1\Lambda} x _1\in X_2.
\end{equation}
Noting that $F_1=F_0+SF_2$, it follows from  \dref{2020661507} that
\begin{equation}\label{wxh20206301231}
(\tilde{A}_1+F_0C_{1\Lambda} )x _1+SF_2 C_{1\Lambda} x _1+B_1C_{2\Lambda} x _2 =
 (\tilde{A}_1+F_1C_{1\Lambda})  x _1+B_1C_{2\Lambda} x _2 \in X_1,
\end{equation}
which, together with  \dref{20207222102}, \dref{wxh20206301652}, \dref{wxh20206301231}  and  $S\in \mathcal{L}(X_2,X_1)$, yields
\begin{equation}\label{wxh20206301232}
\begin{array}{ll}
\disp (\tilde{A}_1+F_0C_{1\Lambda} )( x _1+S x _2)&\disp
=(\tilde{A}_1+F_0C_{1\Lambda} ) x _1
 +B_1C_{2\Lambda} x _2+
S  \tilde{A}_2 x _2 \crr
&\disp \hspace{-3cm} =[(\tilde{A}_1+F_0C_{1\Lambda} ) x _1
+SF_2C_{1\Lambda} x _1+B_1C_{2\Lambda} x _2]+
S( \tilde{A}_2 x _2-F_2C_{1\Lambda} x _1) \in X_1.
\end{array}
\end{equation}
From \dref{2020429956AD510}, \dref{wxh20206301652} and \dref{wxh20206301232}, we can conclude that   $\mathbb{S}( x _1, x _2)^\top\in D(\mathscr{A}_{\mathbb{S}})$ and thus $\mathbb{S}(D(\mathscr{A}))\subset D(\mathscr{A}_{\mathbb{S}})$.

On the other hand, for any $(x_1,x_2)^\top\in D(\mathscr{A}_{\mathbb{S}})$, we have
 \begin{equation}\label{20207231012}
\tilde{A}_2 {x}_2-F_2C_{1\Lambda}( {x}_1-S {x_2}) \in X_2,
\end{equation}
which implies that
  $x_2\in X_{2F_2}$.
By $F_1=F_0+SF_2$,  \dref{20207222102}, \dref{2020429956AD510}  and  $S\in \mathcal{L}(X_2,X_1)$,
\begin{equation}\label{wxh20206301233}
(\tilde{A}_1+F_1C_{1\Lambda})({x}_1-Sx_2)+B_1C_{2\Lambda}x_2= (\tilde{A}_1+F_0C_{1\Lambda} )x_1
-S[ \tilde{A}_2x_2-F_2C_{1\Lambda}(x_1-Sx_2)] \in X_1.
\end{equation}
 It follows from \dref{20207231012} and \dref{wxh20206301233}  that  $\mathbb{S}^{-1}( {x}_1, {x}_2)^\top\in D(\mathscr{A})$ and thus $D(\mathscr{A}_{\mathbb{S}})\subset \mathbb{S}(D(\mathscr{A}))$.
 We therefore arrive at
 $D(\mathscr{A}_{\mathbb{S}})= \mathbb{S}(D(\mathscr{A}))$.
 By exploiting  \dref{20207222102}, a straightforward computation shows that $\mathbb{S}\mathscr{A}\mathbb{S}^{-1} (x_1,x_2)^\top=\mathscr{A}_{\mathbb{S}}(x_1,x_2)^\top$ for any $(x_1,x_2)^\top\in D(\mathscr{A}_{\mathbb{S}})$.
 This proves that $\mathscr{A}_{\mathbb{S}}$ and $\mathscr{A}$  are similar each other.

 Since $F_0 $  and $F_2\ $ detect exponentially  systems  $(A_1, C_1)$ and
 $(A_2, C_{1 \Lambda }S)$, respectively,  $e^{  (\tilde{A}_1+F_0C_{1\Lambda  })  t}$ and
 $e^{(\tilde{A}_2+F_2 C_{1\Lambda}S)t }$ are exponentially stable in $X_1$ and $ X_2$, respectively.
Moreover,  $C_1$ is admissible for $e^{  (\tilde{A}_1+F_0C_{1\Lambda  })  t}$  and
$F_2$ is admissible for $e^{(\tilde{A}_2+F_2 C_{1\Lambda}S)t }$.
 By   \cite[Lemma 3.3]{FPart1},
$\mathscr{A}_\mathbb{S}$  generates an exponentially stable   $C_0$-semigroup $e^{\mathscr{A}_\mathbb{S} t}$  on $X_1\times X_2$.
From the similarity of $\mathscr{A}_\mathbb{S}$
and $\mathscr{A}$, the operator $\mathscr{A}$ generates an
exponentially stable $C_0$-semigroup $e^{\mathscr{A}t}$ on
$X_1\times X_2$ as well.
As a result, system \dref{2020661504} with the initial state
$(\tilde{x}_1(0),\tilde{x}_2(0))= (x_1(0)-\hat{x}_1(0),x_2(0)-\hat{x}_2(0))$
 admits a unique solution $(\tilde{x}_1,\tilde{x}_2)^\top\in C([0,\infty);X_1\times X_2)$.
 Let $(\hat{x}_1(t),\hat{x}_2(t))= (x_1(t)-\tilde{x}_1(t),x_2(t)-\tilde{x}_2(t))$.
 A straightforward computation then shows that such a defined
$(\hat{x}_1(t),\hat{x}_2(t))$ solves the observer  \dref{wxh202020617402}  and satisfies \dref{wxh202020618082}. The uniqueness of the solution can be  obtained easily by the linearity of the observer.
The   proof is complete.
\end{proof}

\begin{remark}\label{Re2020813}
When a system is given,
one needs  to solve the Sylvester equation \dref{20204181538} to get the tuning gain operators $F_1$ and $F_2$   of the observer \dref{wxh202020617402}.
This is usually  not a trivial task, particularly for the case that both $A_1$ and $A_2$ are unbounded.
Consequently, Theorem \ref{wxhTh20202211035} does  not mean   that we  can always design  an   observer for the given system.
However, under some reasonable additional assumptions, we still can solve the
Sylvester equation  analytically or numerically even for  the cascade system involving  a multi-dimensional PDE
(see, e.g.,  \cite{Natarajan2016TAC} and \cite{Lassi2014SIAM}).
In particular,  the situation will become easier
provided one of $A_1$ and  $A_2$ is  bounded.
In this case, the solution of the   Sylvester equation \dref{20204181538} always exists  provided (see, e.g.,\cite{FPart1})
\begin{equation} \label{202068951**}
 \sigma({A}_1+F_0C_{1\Lambda})\cap \sigma(A_2)=\emptyset.
 \end{equation}
When the cascade system consists of an ODEs and a one-dimensional PDE, the problem becomes quite easy. In this case, an implementable way to solve the    Sylvester equation is given at the end of Section 5 of  the first paper \cite{FPart1} of this series works.
To show the effectiveness, this method will be applied to
observer design for ODEs  with output  delay and unstable heat equation with ODE sensor dynamics in Sections \ref{ODE+delay} and \ref{heat+ODE}, respectively.
\end{remark}

When $X_2$ is finite-dimensional, we can  characterize the existence of   the tuning gains $F_1$ and $F_2$ through the system \dref{wxh20202131701} itself.

\begin{corollary} \label{Co2020425849}
Let    $A_2 \in  \mathcal{L}(X_2) $ be a matrix with $\sigma(A_2)\subset\{ s\ |\ {\rm Re}s\geq0\}$,
and let    $(A_1,B_1, C_1)$ be a    regular linear system.   Suppose that
system $(A_1, C_1)$ is exponentially detectable
and $( \A , \mathcal{C})$  is approximately observable,
where $\A$ and $\mathcal{C}$ are given by \dref{20205121037} and \dref{2020521722}, respectively.
Then, there exist  $F_1\in \mathcal{L}(Y_1,[D(A^*_1)]') $ and
$F_2\in \mathcal{L}(Y_1,X_2) $ such that
the observer \dref{wxh202020617402}  of system \dref{wxh20202131701} is well-posed:
For any $(\hat{x}_1(0),\hat{x}_2(0))^{\top}\in X_1\times X_2$ and $u\in L^2_{\rm loc}([0,\infty);U_2)$,
the observer \dref{wxh202020617402}    admits a unique solution $(\hat{x}_1,\hat{x}_2)^{\top}\in C([0,\infty);X_1\times X_2)$  such that
\dref{wxh202020618082} holds for some  positive constant $\omega$.
\end{corollary}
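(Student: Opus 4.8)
The plan is to verify, one by one, the hypotheses of Theorem~\ref{wxhTh20202211035} for the gains produced by its four–step scheme, and then simply quote that theorem for the well-posedness of \dref{wxh202020617402} and the exponential decay \dref{wxh202020618082}. The role of the finite dimensionality of $X_2$ is that essentially every regularity requirement concerning the second subsystem becomes automatic: since $A_2$ is bounded one has $[D(A_2^*)]'=X_2$ and $D(A_2)=X_2$, so $C_2\in\mathcal{L}(X_2,U_1)$, the operator $B_2$, and any candidate $F_2\in\mathcal{L}(Y_1,X_2)$ are bounded and admissible, whence $(A_2,B_2,C_2)$ and $(A_2,F_2,C_2)$ are regular with no further work. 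Thus the only genuine tasks are to produce the detection gain $F_0$ together with the Sylvester solution $S$, and then to produce $F_2$, which is where the approximate observability hypothesis must be spent.

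First I would obtain $F_0\in\mathcal{L}(Y_1,[D(A_1^*)]')$ detecting $(A_1,C_1)$ exponentially from the exponential-detectability hypothesis; by Definition~\ref{De20204191058} this makes $\tilde{A}_1+F_0C_{1\Lambda}$ exponentially stable and $(A_1,F_0,C_1)$ regular with $I$ an admissible feedback for $C_{1\Lambda}(s-\tilde{A}_1)^{-1}F_0$. A negative growth bound places $\sigma(\tilde{A}_1+F_0C_{1\Lambda})$ in a left half–plane, while $\sigma(A_2)\subset\{\mathrm{Re}\,s\ge0\}$; hence the two spectra are disjoint and the closed right half–plane lies in the resolvent set, so $\sigma(A_2)\subset\rho_\infty(\tilde{A}_1+F_0C_{1\Lambda})$. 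Disjointness together with the boundedness of $A_2$ lets me invoke criterion \dref{202068951**} of Remark~\ref{Re2020813} to solve the Sylvester equation \dref{20204181538} for a unique $S\in\mathcal{L}(X_2,X_1)$, and Lemma~\ref{wxh20206291110}(i) upgrades this to $C_{1\Lambda}S\in\mathcal{L}(X_2,Y_1)$.

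The crux is to show that $(A_2,C_{1\Lambda}S)$ is observable, for then pole placement on this finite-dimensional pair yields $F_2\in\mathcal{L}(Y_1,X_2)$ with $A_2+F_2C_{1\Lambda}S$ Hurwitz, i.e. detecting $(A_2,C_{1\Lambda}S)$ exponentially. My route is to transfer the approximate observability from $(\A,\mathcal{C})$ to the output-injected operator $\A_0=\A+(F_0,0)^{\top}\mathcal{C}$, which retains the block-triangular form of $\A$ but with $(1,1)$-entry $\tilde{A}_1+F_0C_{1\Lambda}$, and then to apply Theorem~\ref{Th2020681759} to $(\A_0,\mathcal{C})$. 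The Sylvester solution that Theorem~\ref{Th2020681759} attaches to $\A_0$ is exactly the present $S$, so its conclusion is precisely that $(A_2,-C_{1\Lambda}S)$ is observable. Two ingredients make this legitimate: the closed loop $(\tilde{A}_1+F_0C_{1\Lambda},B_1,C_1)$ is again regular and, by the previous paragraph, $\sigma(A_2)\subset\rho_\infty(\tilde{A}_1+F_0C_{1\Lambda})$, which are the standing hypotheses of Theorem~\ref{Th2020681759}; and the output injection $(F_0,0)^{\top}$ preserves approximate observability. The latter is the step I expect to be the main obstacle, and I would settle it by checking that $(F_0,0)^{\top}$ is an admissible output injection for $(\A,\mathcal{C})$ admitting $I$ as admissible feedback. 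A computation of the resolvent of the triangular $\A$ shows that the relevant transfer function $\mathcal{C}(s-\A)^{-1}(F_0,0)^{\top}$ collapses to $C_{1\Lambda}(s-\tilde{A}_1)^{-1}F_0$, so these two properties are inherited verbatim from the detection of $(A_1,C_1)$ by $F_0$; the invariance of (approximate) observability under admissible output injection then applies.

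Finally, setting $F_1=F_0+SF_2$, I would collect the verified items, namely that $(A_1,B_1,C_1)$ and $(A_2,B_2,C_2)$ are regular, $U_1=Y_2$, $F_0$ detects $(A_1,C_1)$ exponentially, $F_2$ detects $(A_2,C_{1\Lambda}S)$ exponentially, and $(A_2,F_2,C_2)$ is regular. These are exactly the hypotheses of Theorem~\ref{wxhTh20202211035}, which then delivers the well-posedness of the observer \dref{wxh202020617402} and the exponential error decay \dref{wxh202020618082}, completing the proof.
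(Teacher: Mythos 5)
Your proposal is correct and follows essentially the same route as the paper: the paper likewise forms $\A_{F_0}=\A+(F_0,0)^{\top}\mathcal{C}$, verifies via the collapsed transfer function $\mathcal{C}(s-\A)^{-1}(F_0,0)^{\top}=C_{1\Lambda}(s-\tilde{A}_1)^{-1}F_0$ that this output injection is admissible and preserves approximate observability (citing \cite[Remark 6.5]{Weiss1994MCSS}), transfers observability through the Sylvester similarity $\mathbb{S}$ to conclude that $(A_2,-C_{1\Lambda}S)$ is observable by the Hautus test, and finishes with pole placement and Theorem \ref{wxhTh20202211035}. The only real difference is one of packaging: where you invoke Theorem \ref{Th2020681759} wholesale --- which as stated assumes $Y_1$ finite-dimensional and the regularity of the closed-loop triple $(\tilde{A}_1+F_0C_{1\Lambda},B_1,C_1)$, the latter asserted by you but needing Weiss's closed-loop regularity theory rather than following from detectability alone --- the paper re-runs the relevant fragment of that theorem's proof inline, thereby needing neither extra hypothesis.
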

\begin{proof}
Since system $(A_1, C_1)$ is exponentially detectable, there exists an $F_0\in \mathcal{L}(Y_1,[ D (A_1^*)]')$  to detect  system $(A_1, C_1)$ exponentially.
In particular, ${A}_1+F_0C_{1\Lambda  }$ generates an exponentially stable $C_0$-semigroup
$e^{({A}_1+F_0C_{1\Lambda  })t}$ on $X_1$.
Noting  that the  matrix $A_2$  satisfies $\sigma(A_2)\subset\{ s\ |\ {\rm Re}s\geq0\}$, we have
 \dref{202068951**}.
It follows from \cite{Natarajan2016TAC} that the Sylvester equation \dref{20204181538} admits a solution  $S\in \mathcal{L}(X_2,X_1)$  in the sense that
\begin{equation} \label{20207231431}
 (\tilde{A}_1+F_0C_{1\Lambda} )S x_2  - S   {A}_2x_2 =   B_1C_{2 }x_2,\ \ x_2\in X_2 .
   \end{equation}
By  Lemma   \ref{wxh20206291110}, we have  $C_{1\Lambda}S\in \mathcal{L}(X_2,Y_1)$.
Define
\begin{equation} \label{20207231447}
 \A_{F_0} =\begin{pmatrix}
 \tilde{A}_1+F_0C_{1\Lambda}&B_1C_{2\Lambda}\\0& {A}_2
\end{pmatrix}
\end{equation}
with
\begin{equation} \label{20207231447domain}
 \disp D(\A_{F_0} )=\left\{\begin{pmatrix}
  x_1 \\x_2 \end{pmatrix} \in X_{1}\times X_{2}  \ \Big{|} \
  \begin{array}{l}\disp (\tilde{A}_1+F_0C_{1\Lambda})x_1 +B_1C_{2\Lambda}x_2\in X_1\\ {A}_2x_2\in X_2
  \end{array} \right\}.
 \end{equation}
As in the proof of Theorem \ref{Th2020681759}, it follows from \dref{20207231431} that
\begin{equation} \label{20208122114}
\A_{F_0}\sim_{\mathbb{S}}  {\A}_\mathbb{S}  \ \ \mbox{and}\ \
  \mathcal{C}_\mathbb{S}=\mathcal{C} \mathbb{S}^{-1}= ( C_{1 \Lambda } , - C_{1\Lambda}S),
  \end{equation}
where  the operator  $\mathbb{S}$ is  given by      \dref{wxh20206291210}   and
${\A}_\mathbb{S}
= {\rm diag}(\tilde{A}_1+F_0C_{1\Lambda}, {A}_2) $ with
 \begin{equation} \label{20208241023}
  D(\A_\mathbb{S}) =\left\{(x_1,x_2)^{\top}
   \in\   X_1\times X_2\ |\
 (\tilde{A}_1+F_0C_{1\Lambda})x_1\in X_1,\ \ {A}_2x_2\in X_2\right\}.
\end{equation}
By a simple computation, the operator $\A_{F_0}$ can be written as
$\A_{F_0}=\A+\mathcal{F}_0\mathcal{C}$, where $\mathcal{F}_0=(F_0,0)^{\top}$.
Moreover, for any $u\in L_{\rm loc}^2([0,\infty); Y_1)$, it follows that
\begin{equation} \label{20208122058}
 \int_0^t e^{\A(t-s)}\mathcal{F}_0u (s)ds =
 \begin{pmatrix}
   \int_0^t e^{A_1(t-s)}F_0 u (s)ds  \\
    0
\end{pmatrix}\in X_1\times X_2,
\end{equation}
which implies that $\mathcal{F}_0$  is admissible for $e^{\A t}$.
By Lemma \ref{Lm2020652112},   $\mathcal{C}$ is  admissible for $e^{\A t}$ as well.
Noting that  $F_0\in \mathcal{L}(Y_1,[ D (A_1^*)]')$ detects system $(A_1, C_1)$ exponentially
and
\begin{equation} \label{wxh2020891330}
\mathcal{C}(s-\A)^{-1}\mathcal{F}_0= C_{1\Lambda}(s-\tilde{A}_1)^{-1}F_0,\ \ \forall\ s\in \rho(\A),
\end{equation}
we conclude that   $( \A ,\mathcal{F}_0, \mathcal{C})$ is   a regular linear  system and $I$   is an admissible feedback operator for    $\mathcal{C}(s-\A)^{-1}\mathcal{F}_0 $.
Since $( \A , \mathcal{C})$  is approximately observable,  it follows from
\cite[Remark 6.5]{Weiss1994MCSS} that system
$( \A +\mathcal{F}_0\mathcal{C} ,  \mathcal{C})=     ( \A_{F_0} ,  \mathcal{C})    $  is approximately   observable.
By  Lemma \ref{THwxh20202291137} and the similarity \dref{20208122114},  system  $(\A_{\mathbb{S}},  \mathcal{C}_\mathbb{S})$ is approximately observable as well.
Thanks to  \cite[Remark 6.1.8, p.175]{TucsnakWeiss2009book},  we have
\begin{equation} \label{2020723122}
{\rm Ker}(\lambda-  \A_\mathbb{S} )\cap {\rm Ker}\left( \mathcal{C}  \mathbb{S}^{-1}\right)=\{0\},\ \ \forall\ \lambda\in \sigma(\A_\mathbb{S})\subset\sigma(\tilde{A}_1+F_0C_{1\Lambda})\cup\sigma(A_2).
\end{equation}
For  any $h_2\in {\rm Ker}(\lambda-A_2)\cap {\rm Ker}(  C_{1\Lambda}S)$, it follows that
\begin{equation} \label{2020427929}
(0,h_2)^{\top}\in {\rm Ker}(\lambda-   \A_{\mathbb{S}})\cap {\rm Ker}\left( \mathcal{C}  \mathbb{S}^{-1}\right),\ \ \lambda\in \sigma(A_2),
\end{equation}
which, together with \dref{2020723122}, implies that $  {\rm Ker}(\lambda-A_2)\cap {\rm Ker}(  C_{1\Lambda}S)=\{0\}$.
Furthermore, by \cite[Remark 1.5.2, p.15]{TucsnakWeiss2009book}, $(A_2 , - C_{1\Lambda}S)$ is  observable.
From the pole assignment theorem of the  linear systems,  there exists an $F_2\in \mathcal{L}(Y_1,X_2)$ such that $A_2+F_2C_{1\Lambda}S$ is Hurwitz. Let   $F_1=F_0 +{S}F_2$.
The proof is then accomplished by    Theorem \ref{wxhTh20202211035}.
\end{proof}

The Corollary \ref{Co2020425849} covers   a fair amount of    dynamics compensations for
the finite-dimensional systems with infinite-dimensional sensor dynamics
dominated by the first order hyperbolic equation \cite{SmyshlyaevKrstic2008SCL},
heat equation \cite{Krstic2009SCL,TangSCL2011} as well as the one-dimensional wave equation \cite{Krstic2009TAC}.
In particular, the sensor delay for ODEs  in \cite{KrsticDelaybook} can also be compensated by
Corollary \ref{Co2020425849}.
\section{Disturbance compensation in output regulation}\label{secDisturbance}
Thanks to the abstract setting \dref{wxh20202131701}, it is found that the sensor dynamics compensation  of   linear  systems is closely related to the
disturbance and state estimation in   the output regulation, which were usually  considered as
two  different problems in   existing literatures. We refer the results about  sensor   dynamics compensation to  \cite{SmyshlyaevKrstic2008SCL,Krstic2009SCL,Krstic2009TAC,KrsticDelaybook} and the
results on output regulation to
\cite{Natarajan2016TAC,Lassi2017SIAM,Lassi2014SIAM,Lassi2010SIAM}.

 Consider the following  output regulation problem     in the state space $X_1$,   input  space $U_1$ and the output space $Y_1$:
  \begin{equation} \label{2020681703}
 \left\{\begin{array}{l}
\disp \dot{z}_1(t) = A_1 z_1(t)+B_d d(t)+B_1u(t) ,\crr
\disp y(t)=C_{1 }z_1(t)+r(t),
  \end{array}\right.
\end{equation}
 where $A_1:X_1\to X_1$, $B_1:U_1\to X_1$,  $C_1:X_1\to Y_1$ are the system operator, control  operator and the
 observation    operator, respectively,  $u(t)$ is the control, $d(t)$ is the disturbance in a Hilbert space $U_d$, $B_d:U_d\to X_1$,    $y(t)$ is the tracking  error, and $r(t)$ is the reference signal.

As in the conventional  output regulation problem discussed    in \cite{Lassi2014SIAM},  we suppose that the    disturbance and  reference signal  are generated from the following
exosystem in Hilbert space $X_2$:
   \begin{equation} \label{2020681715}
  \dot{z}_2(t) = A_2 z_2(t), \ d(t)=C_dz_2(t),\ \ r(t)= C_2z_2(t),
   \end{equation}
 where  $A_2\in \mathcal{L}(X_2)$,  $C_d:X_2\to U_d$ and   $C_2:X_2\to Y_1$.
In this section, we assume that
  \begin{equation} \label{2020681738}
  X_2=\mathbb{C}^n,\ C_d\in \mathcal{L}(X_2,U_d),\ C_2\in \mathcal{L}(X_2,Y_1), \ \sigma(A_2)\subset\{\lambda\ |\ {\rm Re} \lambda\geq0\}
   \end{equation}
  and $A_1$ generates an exponentially stable $C_0$-semigroup $e^{A_1t}$ on $X_1$.  Combining system \dref{2020681703} and exosystem  \dref{2020681715}, we obtain the
   following coupled system:
  \begin{equation} \label{20204271052}
 \left\{\begin{array}{l}
\disp \dot{z}_1(t) = A_1 z_1(t)+B_dC_dz_2(t)+B_1u(t) ,\crr
\disp \dot{z}_2(t) = A_2 z_2(t)  ,\crr
\disp y(t)=C_{1 }z_1(t)+C_2z_2(t).
\end{array}\right.
\end{equation}
 The main objective of the output regulation is   to design a   tracking   error based   feedback control  to regulate the
  tracking error to zero as $t\to\infty$.
 By  \cite{Natarajan2016TAC}, there exists a full state feedback law
$u(t)= -Qz_2(t)$
that solves the regulation  problem \dref{20204271052} if and only if the
  following regulator
equations
\begin{equation} \label{20205201035}
 \left\{\begin{array}{l}
\disp  {A}_1\Pi - \Pi A_2=-B_dC_d+B_1 Q,\crr
\disp C_{1\Lambda} \Pi+C_2=0
\end{array}\right.
\end{equation}
admits a solution  $\Pi \in \mathcal{L}(X_2,X_1)$
and $Q\in\mathcal{L}(X_2,U_1)$.
In order to improve the results in \cite{Natarajan2016TAC} where only the full state feedback is considered, we  consider, in this section, an observer based error feedback design.
By the   separation principle of the  linear systems, an    error based feedback can  be   designed  easily   provided that we  have an   observer for system \dref{20204271052}.
The only measurement for the observer design  is the tracking error $y(t)$.

Since system \dref{20204271052} without control is a  cascade of the control plant and
the exosystem, the observer design of system \dref{20204271052} is closely related to the
problem of sensor dynamics compensation.
Actually, after an invertible transformation, we can regard the control plant as the sensor dynamics of the exosystem.
Suppose that $(z_1,z_2) ^{\top}\in C([0,+\infty);X_1\times X_2)$ is a solution  of system \dref{20204271052}.
If we define, in terms of the solution of regulator  equations \dref{20205201035},  the transformation
\begin{equation} \label{20205201048}
\begin{pmatrix}
   x_1(t) \\
   x_2(t)
 \end{pmatrix}=
  \begin{pmatrix}
   I_1& -\Pi \\
   0&I_2
 \end{pmatrix} \begin{pmatrix}
   z_1(t) \\
   z_2(t)
 \end{pmatrix},
\end{equation}
 then system \dref{20204271052} is transferred into
\begin{equation} \label{20205201040}
 \left\{\begin{array}{l}
\disp \dot{x}_1(t) = A_1 x_1(t)+B_1Qx_2(t)+B_1u(t) ,\crr
\disp \dot{x}_2(t) = A_2 x_2(t)  ,\crr
\disp y(t)=C_{1 }x_1(t) ,
\end{array}\right.
\end{equation}
which takes   the same form as   system \dref{wxh20202131701}.
As a result,  an observer of system \dref{20204271052} can be designed by combining the transformation \dref{20205201048} and   Theorem  \ref{wxhTh20202211035}, which takes the form:
 \begin{equation} \label{20205201013}
 \left\{\begin{array}{l}
\disp \dot{\hat{z}}_1(t) =  A_1\hat{z}_1(t)+B_dC_{d}\hat{z}_2(t)
-K_1[y(t)- C_{1 }\hat{z}_1(t)-  C_{2 }\hat{z}_2(t)]+ B_1u(t),\crr
\disp  \dot{\hat{z}}_2(t) =A_2\hat{z}_2(t)+K_2[y(t)-C_{1 }\hat{z}_1(t)
-  C_{2 }\hat{z}_2(t)],
\end{array}\right.
\end{equation}
where the tuning  gains     $K_1\in \mathcal{L}(Y_1,X_1) $ and $K_2\in \mathcal{L}(Y_1,X_2)$
can be selected by the following scheme:
 \begin{itemize} \label{Scheme20207231653}
 \item Solve  the following  Sylvester equation on $X_2$:
\begin{equation} \label{wxh202078226}
{A}_1  \Gamma   - \Gamma   {A}_2 =   B_dC_d;
   \end{equation}

 \item Find $K_2\in \mathcal{L}(Y_1,X_2)$  to detect  system  $(A_2, C_{1\Lambda }\Gamma -C_2)$ exponentially;

 \item Set  $K_1= \Gamma K_2\in \mathcal{L}(Y_1,X_1)$.
\end{itemize}

 \begin{theorem}\label{Th20205201209}
Let $(A_1,B_1, C_1)$ be  a    regular linear system with the  state space $X_1$,
 input  space $U_1$ and the output space $Y_1$.
Suppose that the exosystem satisfies  \dref{2020681738},   system \dref{20204271052} is approximately observable, $B_d\in \mathcal{L}(U_d,[D(A_1^*)]')$  and
$A_1$ generates an exponentially stable $C_0$-semigroup $e^{A_1t}$ on $X_1$.
 If  the regulation problem \dref{20204271052} is solvable, i.e., the regulator equations \dref{20205201035}  admits a    solution   $\Pi \in \mathcal{L}(X_2,X_1)$
and $Q\in\mathcal{L}(X_2,U_1)$  \cite{Natarajan2016TAC},
then there exist $K_1\in \mathcal{L}(Y_1,X_1) $ and
$K_2\in \mathcal{L}(Y_1,X_2) $  such that
 the observer \dref{20205201013}  of system \dref{20204271052} is well-posed:
  For any $(\hat{z}_1(0),\hat{z}_2(0))^\top\in
 X_1\times X_2$ and $u\in L^2_{\rm loc}([0,\infty);U)$,
 the observer \dref{20205201013}    admits a unique solution $(\hat{z}_1,\hat{z}_2)^\top\in C([0,\infty);X_1\times X_2)$  such that
\begin{equation} \label{20205201210}
 e^{\omega t} \|(z_1(t)-\hat{z}_1(t), z_2(t)-\hat{z}_2(t))^{\top}\|_{X_1\times X_2}\to  0
  \ \ \mbox{as}\ \ t\to\infty,
\end{equation}
   where       $\omega$  is a positive constant that is independent of $t$.
\end{theorem}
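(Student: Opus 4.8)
The plan is to reduce the observer design for \dref{20204271052} to the sensor--dynamics compensation problem already solved in Theorem \ref{wxhTh20202211035} by exploiting the change of variables \dref{20205201048}. Since $\Pi$ and $Q$ solve the regulator equations \dref{20205201035}, the transformation \dref{20205201048} carries \dref{20204271052} into \dref{20205201040}, which is precisely the cascade \dref{wxh20202131701} with $C_2$ replaced by $Q$, with $B_2=0$, and with the harmless \emph{known} control $B_1u$ appended to the $x_1$-equation (which cancels in the error dynamics). Because $A_1$ already generates an exponentially stable semigroup, $F_0=0$ detects $(A_1,C_1)$ exponentially, so the scheme preceding Theorem \ref{wxhTh20202211035} reduces to solving \dref{20204181538} in the form $A_1S-SA_2=B_1Q$, to detecting $(A_2,C_{1\Lambda}S)$, and to setting $F_1=SF_2$.

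The crucial step is to reconcile this reduced Sylvester equation with the scheme equation \dref{wxh202078226} governing $\Gamma$. From the first regulator equation, $B_1Q=A_1\Pi-\Pi A_2+B_dC_d$, so $S$ satisfies $A_1S-SA_2=A_1\Pi-\Pi A_2+B_dC_d$. Subtracting \dref{wxh202078226} and using that $\sigma(A_1)\cap\sigma(A_2)=\emptyset$ (as $A_1$ is exponentially stable while $\sigma(A_2)\subset\{{\rm Re}\,\lambda\geq0\}$), which guarantees uniqueness, I obtain $S=\Gamma+\Pi$. The second regulator equation $C_{1\Lambda}\Pi+C_2=0$ then gives
\begin{equation*}
C_{1\Lambda}S=C_{1\Lambda}\Gamma-C_2,
\end{equation*}
so the pair $(A_2,C_{1\Lambda}S)$ of Theorem \ref{wxhTh20202211035} is exactly the pair $(A_2,C_{1\Lambda}\Gamma-C_2)$ in the scheme, whence $K_2=F_2$. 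Likewise $F_1=SF_2=(\Gamma+\Pi)K_2$, so after back-substitution the first gain equals $F_1-\Pi F_2=\Gamma K_2$, matching the scheme.

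For detectability and well-posedness, observe that \dref{20205201048} is bounded and invertible and that $(C_1,0)$ composed with it equals $(C_1,-C_1\Pi)=(C_1,C_2)$, so the measured signal $y$ is unchanged; by Lemma \ref{THwxh20202291137} the approximate observability of \dref{20204271052} transfers to the cascade \dref{20205201040}. Since $A_2$ is a matrix and $Q,C_2,K_2$ are bounded, all regularity and admissibility hypotheses are automatic, and applying Corollary \ref{Co2020425849} to \dref{20205201040} (with $C_2=Q$ and $F_0=0$) produces $F_2=K_2\in\mathcal{L}(Y_1,X_2)$ detecting $(A_2,C_{1\Lambda}S)$ exponentially, via pole assignment on the observable finite-dimensional pair $(A_2,-C_{1\Lambda}S)$. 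Theorem \ref{wxhTh20202211035} then yields exponential convergence of the observer for \dref{20205201040}. Transforming back through $\hat{z}_1=\hat{x}_1+\Pi\hat{x}_2$, $\hat{z}_2=\hat{x}_2$ and using the regulator equations to collapse the cross terms $-A_1\Pi+\Pi A_2+B_1Q=B_dC_d$ and the innovation $y-C_1\hat{x}_1=y-C_1\hat{z}_1-C_2\hat{z}_2$, the reduced observer becomes exactly \dref{20205201013}. Finally, since $(z_1-\hat{z}_1,z_2-\hat{z}_2)$ is the image of $(x_1-\hat{x}_1,x_2-\hat{x}_2)$ under the bounded inverse transformation, the estimate \dref{wxh202020618082} transfers verbatim to \dref{20205201210}.

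The main obstacle is the identity $S=\Gamma+\Pi$ together with the back-transformation bookkeeping: one must verify that the two \emph{a priori} different Sylvester equations \dref{20204181538} and \dref{wxh202078226} are compatible, and that rewriting the observer in $z$-coordinates simplifies the cross terms and the innovation \emph{exactly} into \dref{20205201013}. Both hinge on using \emph{both} regulator equations in \dref{20205201035}, not merely the first.
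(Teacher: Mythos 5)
Your proposal is correct and follows essentially the same route as the paper: transform \dref{20204271052} into the cascade \dref{20205201040} via \dref{20205201048}, identify $S=\Gamma+\Pi$ (the paper verifies this by direct computation where you invoke uniqueness from $\sigma(A_1)\cap\sigma(A_2)=\emptyset$, an equivalent step), apply Corollary \ref{Co2020425849} to get $F_2$ and $F_1=SF_2$, and transform back to obtain $K_2=F_2$, $K_1=\Gamma K_2$ and the convergence \dref{20205201210}. If anything, your write-up is slightly more complete than the paper's, since you make explicit both the transfer of approximate observability through Lemma \ref{THwxh20202291137} and the exact collapse of the cross terms and innovation via the two regulator equations, which the paper leaves as implicit computations.
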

\begin{proof}
  Since the semigroup $e^{A_1t}$ is exponentially stable in  $X_1$ and
$\sigma(A_2)\subset\{\lambda\ |\ {\rm Re}\  \lambda\geq 0\}$,
we have   $\sigma(A_1)\cap\sigma(A_2)=\emptyset$. By \cite{Natarajan2016TAC},
the Sylvester equation \dref{wxh202078226} admits a    unique  solution  $\Gamma\in \mathcal{L}(X_2,X_1)$.
In terms of the solution $\Pi \in \mathcal{L}(X_2,X_1)$, $Q\in\mathcal{L}(X_2,U_1)$ of the regulator
equations \dref{20205201035}, we can define the transformation \dref{20205201048} to
convert system \dref{20204271052}  into  system  \dref{20205201040}.
Moreover, a simple computation shows that
\begin{equation} \label{2020522926}
   \tilde{A}_1   Sx_2   - S   {A}_2x_2 =   B_1Qx_2 ,\ \  S =\Gamma+\Pi \in \mathcal{L}(X_2,X_1),\ \ \forall\ x_2\in X_2.
\end{equation}
 In view of the observer \dref{wxh202020617402},
 the observer of system \dref{20205201040} is
 \begin{equation} \label{20205201159}
 \left\{\begin{array}{l}
\disp \dot{\hat{x}}_1(t) =  A_1\hat{x}_1(t)+B_1Q\hat{x}_2(t)
-F_1[y(t)- C_{1 }\hat{x}_1(t)  ]+ B_1u(t),\crr
\disp  \dot{\hat{x}}_2(t) =A_2\hat{x}_2(t)+F_2[y(t)-C_{1 }\hat{x}_1(t)
 ],
\end{array}\right.
\end{equation}
 where  the tuning  gains   $F_1$ and $F_2$  satisfy:
  $F_2\in \mathcal{L}(Y_1,X_2)$  detects system $(A_2, C_{1 \Lambda }S)$ and
 $F_1= {S}F_2\in \mathcal{L}(Y_1,X_1)$.  By  exploiting Corollary \ref{Co2020425849},
for any initial state $(\hat{x}_1(0),\hat{x}_2(0)) ^{\top}\in  X_1\times X_2$,
 the observer \dref{20205201159}
 admits a unique solution $(\hat{x}_1,\hat{x}_2)^{\top}\in C([0,\infty);X_1\times X_2)$  such that
 \dref{wxh202020618082} holds for   some
 positive constant $\omega$.
  Let
   \begin{equation} \label{2020523721}
 \begin{pmatrix}
   \hat{z}_1(t) \\
   \hat{z}_2(t)
 \end{pmatrix}=
  \begin{pmatrix}
   I_1&  \Pi \\
   0&I_2
 \end{pmatrix}  \begin{pmatrix}
   \hat{x}_1(t) \\
   \hat{x}_2(t)
 \end{pmatrix},\ \ K_2=F_2\ \ \mbox{and}\ \ K_1= \Gamma K_2.
\end{equation}
Then,  it is easy to see  that  such a defined
 $(\hat{z}_1,\hat{z}_2)^{\top}\in C([0,\infty);X_1\times X_2)$
is a solution of the  observer \dref{20205201013} and moreover,  the convergence \dref{20205201210} holds.

Finally, we show that    both   $K_1$ and   $K_2$ are always bounded.
Indeed,  $K_2\in \mathcal{L}(Y_1,X_2)$  is trivial.
Since  $\Gamma\in \mathcal{L}(X_2,X_1)$ and $K_1=  \Gamma K_2$,   we have $K_1\in \mathcal{L}(Y_1,X_1)$.
This completes the proof of the theorem.
\end{proof}
By \dref{20205201013},  Theorem \ref{Th20205201209} and \cite{Natarajan2016TAC}, an observer
based  error   feedback control for  system \dref{20204271052} can be designed naturally as
\begin{equation} \label{2020851136}
u(t)=-Q\hat{z}_2(t),
\end{equation}
where $\hat{z}_2(t)$ comes from the observer \dref{20205201013}.
The exponential stability of the    resulting closed-loop system
can be obtained easily by the  separation principle of the linear systems.

\begin{remark}\label{Re20208301237}
Although  the
regulator equations
\dref{20205201035} have been used in the process of
    observer   design,
it is interesting  that the observer \dref{20205201013} itself is free from the regulator equations, in other words, we can  design an observer for  system \dref{20204271052} without solving the
  regulator equations \dref{20205201035}.
\end{remark}

\begin{remark}\label{Re2020841839}
Generally speaking, we need to consider the ``robustness" of the feedback \dref{2020851136}
in   robust output regulation.
This problem  can   be investigated  by the internal model principle \cite{Lassi2010SIAM}.
Since the robustness is beyond the topic of this paper, we do not   touch   it here.
However, the problem  is almost trivial provided $ Y_1 =\mathbb{C}$.
Indeed, a simple computation shows that   the  feedback  \dref{2020851136} contains 1-copy
internal model of the exosystem  and hence  is conditionally robust \cite{Lassi2017SIAM, Lassi2014SIAM}.
Consequently, the controller \dref{2020851136}  is robust to all unknown  $ {C}_2$ and $  C_d$ provided $ Y_1 =\mathbb{C}$.
In other words, both  $C_2$ and $C_d$   can be selected specially as in \cite{Meng}.
\end{remark}

\begin{remark}\label{Re2020692101}
When $B_1=B_d$ in  system \dref{20204271052}, the disturbance $C_dz_2(t)$ is in the control channel.
System \dref{20204271052} can be regarded as a stabilization problem with  the input
disturbance $C_dz_2(t)$ and the output
disturbance  $C_2z_2(t)$.
When the disturbance is estimated, it can be compensated directly by its estimation.
This is the main idea of the active disturbance rejection control (ADRC).
\end{remark}
\section{ODEs with output delay}\label{ODE+delay}
In this section, we suppose that $X_2=\mathbb{R}^n$,  $A_2 \in\mathcal{L}(X_2) $ and $U $ is a Hilbert space.
We will  validate the   theoretical results  through the  output delay compensation for    ODEs.  Consider the following  single output  system:
\begin{equation} \label{2020691600}
\left.\begin{array}{l}
\dot{x}_2 (t) = A_2 x_2 (t)+B_2u(t) ,\ \ \
y(t)=C_2 x_2 (t-\tau ),
\end{array}\right.
\end{equation}
where  $C_2 \in \mathcal{L}(X_2 ,\mathbb{R} )$ is the  observation operator,
$B_2 \in \mathcal{L}(U ,X_2 )$ is the control  operator, $u(t)$ is the control
and   $y(t)$ is  the  measurement with  delay $\tau>0$.
Let $ w(x,t)= C_2 x_2(t-x)$  for   $x\in[0,\tau ]$ and $t\geq \tau$.
Then,  system \dref{2020691600} can be written as
\begin{equation} \label{2020691606}
\left\{\begin{array}{l}
\disp  \dot{x}_2 (t)=A_2x_2(t) +B_2u(t), \crr
\disp     w_t(x,t)+ w_x(x,t)=0  ,\ \ x\in(0,\tau), \crr
\disp  w(0,t)=C_2 x_2 (t),  \crr
\disp y(t)= w(\tau ,t).
\end{array}\right.
\end{equation}
Define $ A_1  :D( A_1 )\subset L^2[0,\tau]\to L^2[0,\tau]$ by
\begin{equation} \label{2020691609}
 \left.\begin{array}{l}
 A_1  f=-f' ,\ \ \forall\ f\in D( A_1 )=\left\{f\in   H^1[0,\tau] \ |\ f (0)=0\right\},
\end{array}\right.
\end{equation}
the operator $B_1: \R\to [D(A_1^*)]'$  by $B_1q=q\delta(\cdot)$ for any $q\in\mathbb{R}$, where $\delta(\cdot)$ is the Dirac distribution,
and the operator $C_{1}:    D(A_{1}) \subset L^2[0,\tau] \to \mathbb{R} $ by
\begin{equation} \label{2020691614}
 C_{1}f  =f (\tau),\ \ \forall\ f \in D(A_1).
\end{equation}
Then, system \dref{2020691606} is put into   an abstract form:
 \begin{equation} \label{20120691616}
 \left\{\begin{array}{l}
 w_t(\cdot,t)= A_1  w (\cdot,t)+B_1C_2x_2(t),\crr
 \dot{x}_2 (t)=A_2x_2(t)+B_2u(t),\crr
 y(t)=C_1w(\cdot,t).
\end{array}\right.
\end{equation}
Since $A_1$ generates an exponentially stable $C_0$-semigroup $e^{A_1t}$ on $ L^2[0,\tau]$,
we need to solve the Sylvester equation $A_1S-SA_2=B_1C_2$ for the observer design.

Inspired by \cite{FPart1}, we    suppose that  $S: X_2\to L^2[0,\tau]$ takes the form:
  \begin{equation} \label{2020681057}
\disp    Sv  =\sum_{j=1}^{n}v_j {s}_j :=\langle  s ,v\rangle_{X_2}
,\ \ \forall\ v=(v_1,v_2,\cdots,v_n)^{\top}\in X_2,
\end{equation}
where $s: [0,\tau]\to\R^n$ is a vector-valued function given by
 $s(x)=(s_1(x),s_2(x),\cdots,s_n(x))^{\top} $ for any $x\in[0,\tau]$, $s_j\in L^2[0,\tau]$, $j=1,2,\cdots,n$.
Inserting \dref{2020681057} into the corresponding Sylvester equation    will lead to  a  vector-valued  ODE:
\begin{equation} \label{201206934}
\disp  s'(x )+A_2^*s(x )=0,  \ \
\disp s(0)=-C_2^*.
\end{equation}
 We solve   \dref{201206934}  to get  $s(x )=-e^{-A_2^* x} C^*_2 $, $x\in[0,\tau]$.
Hence, the  solution of the Sylvester equation $A_1S-SA_2=B_1C_2$  is found to be
\begin{equation} \label{2020691640}
 (Sv)(x)=-{C_2} e^{-A_2  x} v  ,\ \ \forall\ v\in X_2,\ x\in [0,\tau].
 \end{equation}
Moreover, $C_1Sx_2=-{C_2} e^{-A_2  \tau}x_2, \ \forall\ x_2\in X_2$. According to  the scheme of gain operators  choice  in   Section \ref{Observer1}, if   $F_2$   detects $(A_2,C_1S)$, we can choose  $F_1=SF_2=-{C_2} e^{-A_2  \cdot} F_2$.
In view of  \dref{wxh202020617402}, the  observer of system \dref{20120691616} is
\begin{equation} \label{2020623754}
\left\{\begin{array}{l}
\disp  \hat{w}_t(x,t)+\hat{w}_x(x,t)=C_2e^{-A_2x}F_2 [w(\tau,t)-\hat{w}(\tau,t)],\ \ x\in(0,\tau),\crr
\disp \dot{\hat{x}}_2(t) = A_2\hat{ x}_2(t) + F_2 [w(\tau,t)-\hat{w}(\tau,t)]+B_2u(t) ,\crr
\disp \hat{w}(0,t)=C_2\hat{x}_2(t).
\end{array}\right.
\end{equation}
In particular, if we choose $F \in \mathcal{L}( \mathbb{R},X_2)$ such that
 $A_2+FC_2 $ is Hurwitz, then the operator  $A_2+e^{A_2\tau}F C_2 e^{-A_2 \tau}=
  A_2-e^{A_2\tau}F C_1S $ is also Hurwitz due to the invertibility of
 $e^{ A_2 \tau}$. Then, we   can choose  $F_2=-e^{A_2\tau}F$
  and $F_1= C_2e^{A_2(\tau-x)}F$.  The   observer of system \dref{20120691616} turns to be
  \begin{equation} \label{2020691644}
\left\{\begin{array}{l}
 \disp  \hat{w}_t(x,t)+\hat{w}_x(x,t)=-C_2e^{A_2(\tau-x)}F [w(\tau,t)-\hat{w}(\tau,t)],\ \
 x\in(0,\tau),\crr
\disp \dot{\hat{x}}_2(t) = A_2\hat{ x}_2(t) -e^{A_2\tau}F [w(\tau,t)-\hat{w}(\tau,t)] +B_2u(t),\crr
\disp \hat{w}(0,t)=C_2\hat{x}_2(t),
\end{array}\right.
\end{equation}
which is the same as \cite{KrsticDelaybook}
and  \cite{SmyshlyaevKrstic2008SCL} where the PDE backstepping method was  exploited.
In contrast to the PDE backstepping, the target system is never needed here
and the  Lyapunov function is not  used in   stability analysis. This avoids the  difficulty of  target system choosing  and the Lyapunov-based technique
for PDEs with    time-delay.
More importantly, the  main idea of  developed approach can also be extended to the delay compensation for infinite-dimensional system, which will be considered in detail in the next paper  \cite{FPartDelay} of this series works.
\section{Unstable heat with ODE dynamics}\label{heat+ODE}
In this section, we consider a more complicated problem to show the effectiveness  of the proposed  approach.
We will observe an unstable heat equation through   ODE sensor dynamics.
Comparing with the ODE system with PDE sensor dynamics, the PDE system with ODE sensor dynamics is
more complicated.
The intuitive  reason behind  this  is  that we need to  observe  an  infinite-dimensional plant   through a  finite-dimensional system.

Consider the following  cascade system of ODEs and an unstable  heat equation:
 \begin{equation} \label{20206101017}
 \left\{\begin{array}{l}
 \disp \dot{v} (t) = A_1  v(t)+B_1w(1,t), \crr
\disp w_t(x,t)=w_{xx}(x,t)+\mu w(x,t) ,\     x\in(0,1),\  \mu>0,\crr
\disp  w(0,t)=0,\ \ w_x(1,t)= u(t),\crr
 y(t)= C_1 v(t),
\end{array}\right.
\end{equation}
 where $A_1 \in  \R^{m\times m} $, $B_1\in \mathcal{L}(\R,\R^{m})$,
 $C_1\in \mathcal{L}(\R^{m},\R)$,   $u(t)$ is the control and $y(t)$ is the  measured output.
The heat subsystem is the control plant and the ODE system serves as the sensor dynamics.
This makes the observation of  \dref{20206101017}    very  different from the existing results  in  \cite{Krstic2009SCL} and \cite{KrsticDelaybook}.
Define the operator  $ A_2  :D( A_2 )\subset L^2[0,1]\to L^2[0,1]$ by
\begin{equation} \label{20206101041}
 \left.\begin{array}{l}
 A_2  f =f'' +\mu f ,\ \ \forall\ f\in
D( A_2 )=\left\{f\in  H^2[0,1] \mid f(0)= f'(1)=0\right\},
\end{array}\right.
\end{equation}
the operator $B_2\in \mathcal{L}(\mathbb{R}, [D(A^*_2)]')$ by $B_2c:=\delta(\cdot-1)c$ for any $c\in\R$ and $C_2\in \mathcal{L}( D(A_2), \R  )$ by
\begin{equation} \label{20206101042}
 \left.\begin{array}{l}
 \disp C_2f = f(1),\ \ \forall\ f\in D(A_2),\end{array}\right.
\end{equation}
where $\delta(\cdot)$ is the Dirac distribution.
With these operators at hand,   system \dref{20206101017} can be written  as an abstract form in the state space $\R^m\times L^2[0,1]$:
\begin{equation} \label{20206101044}
\left\{\begin{array}{l}
  \disp \dot{v} (t) = A_1  v(t)+B_1C_2w(\cdot,t), \crr
 w_t(\cdot,t) =A_2w (\cdot,t)+B_2u(t),\crr
y(t)=C_1v(t).
\end{array}\right.
\end{equation}
According to Theorem \ref{wxhTh20202211035}, to design  the observer for system \dref{20206101044},
we should  first   choose $F_0 \in {\mathbb R}^{m}$ such that $A_1+F_0C_1$ is Hurwitz and then  solve
the following Sylvester equation  on $D(A_2)$:
\begin{equation} \label{20206101055}
  ( {A}_1+F_0C_{1  }) S   - S{A}_2 =   B_1C_{2 } .
   \end{equation}
Inspired by \cite{FPart1}, we define  the vector-valued function $s:[0,1]\to\R^m$ by
$s(x)=(s_1(x),s_2(x),\cdots,s_m(x))^{\top} $  for any $x\in[0,1]$, where $s_j\in    L^2[0,1]$ will be determined later, $j=1,2,\cdots,m$.
Suppose that  the solution of  Sylvester equation
  \dref{20206101055} takes the form
  \begin{equation} \label{20206101058}
\disp   Sf=
\begin{pmatrix}
\langle f,  s_1\rangle_{L^2[0,1]}\\
\langle f,  s_2\rangle_{L^2[0,1]}\\
\vdots\\
\langle f,  s_m\rangle_{L^2[0,1]}\\
\end{pmatrix}  :=\langle f,s\rangle_{L^2[0,1] }     ,\ \ \forall\ f\in
L^2[0,1].
\end{equation}
Then, inserting \dref{20206101058} into \dref{20206101055}, we obtain
\begin{equation}\label{wxh2020311804}
s''(x)+\mu s(x)=(A_1+F_0C_1)s(x),\ \ s(0)=0,\ s'(1)=B_1,
\end{equation}
which is  a vector-valued  ODE with respect to the variable $x$.
Solve equation \dref{wxh2020311804} to obtain
 \begin{equation}\label{wxh2020311806}
\disp s(x)=x\mathcal{G}(xG)(\cosh G )^{-1}B_1,\ \ G^2= A_1+F_0C_1  -\mu ,
\end{equation}
where
\begin{equation}\label{20208181020}
 \mathcal{G}(s)=\left\{\begin{array}{ll}
   \disp \frac{\sinh s}{s} ,& s\neq0,s\in \mathbb{C},\\
   1, & s=0.
 \end{array}\right.
   \end{equation}
By \dref{wxh2020311806} and  \dref{20206101058}, $S\in \mathcal{L}(L^2[0,1],\mathbb{R}^m)$ satisfies
\begin{equation} \label{wxh2020311818}
Sf=\int_0^1 f(x)s(x) dx ,\ \ \ \forall\ f \in L^2[0,1],
 \end{equation}
and hence $C_1S\in \mathcal{L}(L^2[0,1],\mathbb{R})$  given by
 \begin{equation} \label{wxh2020311819}
C_1Sf=
\int_{0}^{1}f(x)C_1s(x) dx=\langle  f,C_1s\rangle_{L^2[0,1]} ,\ \ \ \forall\  f\in L^2[0,1].
 \end{equation}
According to the developed scheme, we need to design  $F_2\in \mathcal{L}(\mathbb{R},[ D (A_2^*)]')$  to  detect  system $(A_2, C_{1}S)$ exponentially, i.e., design $F_2$  such that the following system is exponentially stable
\begin{equation} \label{wxh2020312229}
 \left\{\begin{array}{ll}
\disp   z _t(x,t)= z_{xx}(x,t)+\mu z(x,t)+F_2\int_{0}^{1}C_1s(x) z(x,t)dx,\crr
\disp   z(0,t)= z_x(1,t)=0.
\end{array}\right.
\end{equation}
We treat this problem by the modal decomposition approach  which has been used in    \cite{CoronTrelat2004SICON,PrieurandTrelat2019TAC,Russell1978SIAMReview}.

By a simple computation, system \dref{wxh2020312229} can be written as
\begin{equation}\label{wxh20203122292}
z_t(\cdot,t) = (A_2+F_2C_1S)z(\cdot,t).
\end{equation}
 Let
\begin{equation}\label{wxh2020321044}
\phi_n(x)=\sqrt{2}\sin \sqrt{\lambda_n}x , \ \ \lambda_n=\left(n-\frac{1}{2}\right)^2\pi^2, \ x\in[0,1],\ \ n= 1,2,\cdots.
\end{equation}
Then,  $\{\phi_n(\cdot)\}_{n=1}^{\infty} $ forms an orthonormal basis for  $L^2[0,1]$, which satisfies
\begin{equation}\label{wxh2020321047}\left.\begin{array}{l}
 \phi_n''(x)=-\lambda_n\phi_n(x),\ \
\phi_n(0)=\phi'_n(1)=0,\  n=1,2,\cdots.
\end{array}\right.\end{equation}
We represent   the solution   of system \dref{wxh2020312229} and the function
$C_1s$    as
\begin{equation}\label{wxh2019123042}\left.\begin{array}{l}
\disp    z(\cdot,t)=\sum\limits_{n=1}^{\infty}z_n(t)\phi_n(\cdot), \ \
C_1s(\cdot)=\sum\limits_{n=1}^{\infty}\gamma_n \phi_n(\cdot),
\end{array}\right.\end{equation}
where $z_n(t)$ and $\gamma_n$ are    the Fourier coefficients, given by
\begin{equation}\label{wxh201912304}\left.\begin{array}{l}
z_n(t)=\displaystyle \int_{0}^{1}z(x,t)\phi_n(x)dx ,\
\gamma_n =\displaystyle \int_{0}^{1}C_1s(x)\phi_n(x)dx,\ \ n=1,2,\cdots.
\end{array}\right.\end{equation}
Finding the derivative of $z_n(t)$ along the system  \dref{wxh2020312229},  we obtain
\begin{equation}\label{wxh201912305} \begin{array}{rl}
\dot{z}_n(t)=&\displaystyle \int_{0}^{1} z_t(x,t)\phi_n(x)dx\crr
=&\displaystyle \int_{0}^{1}\left[ z_{xx}(x,t)+\mu z (x,t)+F_2\int_{0}^{1}C_1s(x)z(x,t)dx \right]\phi_n(x)dx\crr
=&\displaystyle (-\lambda_n+\mu)z_n(t)
+ \left\langle F_2 \sum\limits_{j=1}^{\infty}\gamma_j z_j(t),\phi_n\right\rangle_{[D(A_2)]',D(A_2)}\crr
=&\displaystyle (-\lambda_n+\mu)z_n(t)
+ \left[\sum\limits_{j=1}^{\infty}\gamma_jz_j(t)\right]F_2^*\phi_n.
\end{array} \end{equation}
Since $\lambda_n\to +\infty (n\to \infty)$,
there exists a positive integer $N$ so that
\begin{equation}\label{wxh2020321759}
 (-\lambda_n+\mu)<0,\ \ \forall\ n>N.
 \end{equation}
 Suppose that there exists an  $L_N=[l_1,l_2,\cdots,l_N]^\top\in \mathbb{R}^{N} $ such that
 $\Lambda_N+L_N\Gamma_N$ is  Hurwitz, where
 \begin{equation}\label{wxh2020321800}
\Lambda_N={\rm diag}\{-\lambda_1+\mu,\cdots,-\lambda_N+\mu\},\  \;
\Gamma_N=[\gamma_1,\gamma_2,\cdots,\gamma_N] \in \mathbb{R}^{1\times N}.
\end{equation}
If we choose $ F_2\in L^2(\R,L^2[0,1]) $ by
 \begin{equation}\label{2020611806}
 F_2 c =c \sum\limits_{n=1}^{N}l_n\phi_n(\cdot) ,\ \ \forall\ c\in \R,
\end{equation}
then, it follows from Lemma \ref{wxh2020321655} in Appendix that  the  operator $ A_2+F_2C_1S  $ generates an exponentially stable $C_0$-semigroup  $e^{(A_2+F_2C_1S ) t}$ on $L^2[0,1]$. As a result, $F_2\in \mathcal{L}(\mathbb{R},L^2[0,1])$   detects  system $(A_2, C_{1}S)$ exponentially.

Let $F_1=F_0+SF_2$. It follows from  \dref{2020611806} and \dref{wxh2020311818} that
 \begin{equation}\label{wxh2020321018}
 F_1 c =\left(F_0+\int_0^1 s(x)  \sum\limits_{n=1}^{N}l_n \phi_n(x) dx\right) c,\  \  \forall\ c\in \mathbb{R}.
\end{equation}
In view of    \dref{wxh202020617402}, \dref{2020611806} and \dref{wxh2020321018}, the   observer of system \dref{20206101017} is designed as
 \begin{equation}\label{wxh20202272158}
\left\{\begin{array}{l}
\dot{\hat{v}} (t)= A_1 \hat{v}  (t)+B_1\hat{w}(1,t)
-\left[F_0+\disp\int_0^1 s(x)  \sum\limits_{n=1}^{N}l_n \phi_n(x) dx\right][C_1v(t)-C_1\hat{v} (t)],\crr
 \disp   \hat{w}_t(x,t)=\hat{w}_{xx}(x,t)+\mu \hat{w}(x,t)+\sum\limits_{n=1}^{N}l_n \phi_n(x) [C_1v(t)-C_1\hat{v} (t)]  ,\crr
\disp   \hat{w}(0,t)=0, \  \hat{w}_x(1,t)=u(t).
\end{array}\right.
\end{equation}

\begin{theorem}\label{wxhTh202032926}
Let $\phi_n(\cdot)$ and $\lambda_n$  be given  by \dref{wxh2020321044} and  let
   $s(\cdot)$ be given by \dref{wxh2020311806}.
   Suppose that system \dref{20206101017} is approximately observable
   and the inequality \dref{wxh2020321759} holds with the positive integer $N$.
   Then,  there exists an  $F_0 \in {\mathbb R}^{m}$   and   $L_N=[l_1,l_2,\cdots,l_N]^\top\in \mathbb{R}^{N} $ such that observer \dref{wxh20202272158}  of system \dref{20206101017} is well-posed:
For  any $(\hat{v}(0),\hat{w}(\cdot,0)) ^{\top}\in \R^m\times L^2[0,1]$ and
  $u\in L^2_{\rm loc}[0,\infty)$,
  the observer \dref{wxh20202272158} admits a unique solution    $(\hat{v} ,\hat{w} )^{ \top}\in C([0,\infty);\R^m\times L^2[0,1])$ such that
\begin{equation} \label{20206111426}
 e^{\omega t} \|(v(t)-\hat{v} (t), w(\cdot,t)-\hat{w} (\cdot,t)) ^\top \|_{\R^{ m}\times L^2[0,1]}\to  0
  \ \ \mbox{as}\ \ t\to\infty,
\end{equation}
where $\omega$  is a positive constant that is independent of $t$.
\end{theorem}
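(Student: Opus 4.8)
The plan is to obtain the result as a direct specialization of the abstract observer theorem, Theorem \ref{wxhTh20202211035}, under the identifications $X_1=\R^m$, $X_2=L^2[0,1]$, with $A_1$ the given matrix and $A_2$ the heat operator \dref{20206101041}, and $B_1,C_1,B_2,C_2$ as defined in this section. Written in these coordinates, the observer \dref{wxh202020617402} is exactly \dref{wxh20202272158}, so it suffices to verify the five hypotheses of Theorem \ref{wxhTh20202211035}: regularity of the two subsystems, exponential detection of $(A_1,C_1)$ by some $F_0$, solvability of the Sylvester equation \dref{20204181538}, exponential detection of $(A_2,C_{1\Lambda}S)$ by the chosen $F_2$, and regularity of $(A_2,F_2,C_2)$. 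Once these hold, the well-posedness of \dref{wxh20202272158} and the exponential estimate \dref{20206111426} follow at once, with $F_1=F_0+SF_2$.

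Several hypotheses are routine. The triple $(A_1,B_1,C_1)$ is finite-dimensional, hence trivially regular, while $(A_2,B_2,C_2)$ is the heat equation with Neumann boundary actuation and collocated boundary observation, a standard regular linear system. For $F_0$, I would first note that approximate observability of the cascade \dref{20206101017} forces $(A_1,C_1)$ to be observable: an unobservable mode $v^0$, launched with $w(\cdot,0)=0$ and $u\equiv 0$, keeps $w\equiv 0$ and yields $y\equiv 0$, contradicting observability of $(\A,\mathcal{C})$ unless $v^0=0$. Pole placement then furnishes $F_0\in\R^m$ with $A_1+F_0C_1$ Hurwitz; since $\sigma(A_2)=\{-\lambda_n+\mu\}$ is a countable real set, the finitely many eigenvalues of $A_1+F_0C_1$ can simultaneously be placed off $\sigma(A_2)$, giving $\sigma(A_1+F_0C_1)\cap\sigma(A_2)=\emptyset$. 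This separation is precisely what renders $\cosh G$ in \dref{wxh2020311806} invertible, since an eigenvalue $\nu$ of $G$ with $\cosh\nu=0$ obeys $\nu^2=-\lambda_k$ for some $k$, and as $\nu^2=\lambda-\mu$ for the corresponding eigenvalue $\lambda$ of $A_1+F_0C_1$ this would place $\lambda=-\lambda_k+\mu$ in $\sigma(A_2)$; hence the closed form $s(\cdot)$, the operator $S$, and the bounded functional $C_1S$ of \dref{wxh2020311819} are well defined, with $C_{1\Lambda}S=C_1S$ bounded also by Lemma \ref{wxh20206291110}(i).

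The heart of the argument is the construction of $F_2$, for which I must produce $L_N$ making $\Lambda_N+L_N\Gamma_N$ in \dref{wxh2020321800} Hurwitz. Since $\Lambda_N$ is diagonal with the \emph{distinct} entries $-\lambda_n+\mu$, the pair $(\Lambda_N,\Gamma_N)$ is observable — and such $L_N$ exists — exactly when each coefficient $\gamma_n=\langle C_1s,\phi_n\rangle$, $n=1,\dots,N$, is nonzero. To secure this I would transfer the approximate observability of the cascade down to the pair $(A_2,C_1S)$, exactly as in the proof of Corollary \ref{Co2020425849}: passing from $(\A,\mathcal{C})$ to the output-injected operator $\A_{F_0}=\A+(F_0,0)^{\top}\mathcal{C}$ preserves approximate observability, and the bounded similarity $\mathbb{S}$ of \dref{wxh20206291210} carries $\A_{F_0}$ to ${\rm diag}(A_1+F_0C_1,\,A_2)$ with transported output $\mathcal{C}\mathbb{S}^{-1}=(C_1,-C_1S)$, preserving observability by Lemma \ref{THwxh20202291137}. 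The spectral separation arranged above decouples the two blocks, so $(A_2,-C_1S)$, equivalently $(A_2,C_1S)$, is approximately observable. Finally $A_2$ is self-adjoint with simple eigenvalues and eigenbasis $\{\phi_n\}$, whence approximate observability of $(A_2,C_1S)$ is equivalent to $C_1S\phi_n=\gamma_n\neq 0$ for every $n$; in particular $\gamma_1,\dots,\gamma_N\neq 0$, which delivers $L_N$.

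With $L_N$ fixed, $F_2$ is defined by \dref{2020611806}, and Lemma \ref{wxh2020321655} of the Appendix ensures that $A_2+F_2C_1S$ generates an exponentially stable semigroup, i.e. $F_2$ detects $(A_2,C_{1\Lambda}S)$ exponentially. Since $F_2$ is a finite modal sum and hence bounded, the triples $(A_2,F_2,C_2)$ and $(A_2,F_2,C_1S)$ are automatically regular, so Lemma \ref{wxh20206291110}(ii) applies; setting $F_1=F_0+SF_2$ as in \dref{wxh2020321018} completes the checklist, and Theorem \ref{wxhTh20202211035} then yields \dref{wxh20202272158} together with \dref{20206111426}. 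The hard part will be the observability transfer in the third paragraph: proving $\gamma_n\neq 0$ rather than assuming it is what upgrades the construction preceding the theorem into an existence statement, and it is the one step where the infinite dimensionality of $X_2$ blocks a direct appeal to Corollary \ref{Co2020425849} and forces the decoupling to be run by hand.
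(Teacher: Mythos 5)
Your proposal is correct and follows essentially the same route as the paper's own proof: pole placement for $F_0$ with $\sigma(A_1+F_0C_1)\cap\sigma(A_2)=\emptyset$ (which yields the invertibility of $\cosh G$ by the same spectral-mapping argument), transfer of approximate observability through the output injection $\A_{F_0}=\A+(F_0,0)^{\top}\mathcal{C}$ and the similarity $\mathbb{S}$ to conclude $(A_2,C_1S)$ is approximately observable and hence $\gamma_n\neq 0$, then $L_N$, the modal gain $F_2$ of \dref{2020611806} with Lemma \ref{wxh2020321655}, $F_1=F_0+SF_2$, and Theorem \ref{wxhTh20202211035}. Your explicit unobservable-mode argument for the observability of $(A_1,C_1)$ and your checklist of regularity hypotheses merely spell out steps the paper handles by citation, so there is no substantive difference.
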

\begin{proof}
It is easy to verify that  $(A_2,B_2,C_2)$ is a  regular linear system, where $A_2,B_2,C_2$ are defined by \dref{20206101041}-\dref{20206101042} (see, e.g.\cite{ByrnesJDCS2002}).
By  \cite[Definition 1.2, p.3]{Higham2008book}, the matrix-valued function
$\mathcal{G}(xG)$ and $\cosh G$ are well-defined.
Since system \dref{20206101017} is approximately observable, it follows from  \cite[Remark 6.1.8, p.175]{TucsnakWeiss2009book} and the   cascaded structure of system  \dref{20206101017} that system $(A_1,C_1)$ is observable as well.
As a result, there exists an  $F_0\in \mathcal{L}(Y_1,X_1)$ such that $A_1+F_0C_1$ is Hurwtiz,
and at the same time, $\sigma({A}_1+F_0C_{1})\cap \sigma (A_2)=\emptyset$.
By a simple computation, it follows that
\begin{equation} \label{20208181115}
    \sigma(A_2-\mu)=\left\{ -\left(n-\frac{1}{2}\right)^2\pi^2\ \Big{|}\ n\in \mathbb{N}\right\},
\end{equation}
which, together with \dref{wxh2020311806}, leads to
   $\sigma(G^2)\cap  \left\{ -\left(n-\frac{1}{2}\right)^2\pi^2\ \Big{|}\ n\in \mathbb{N}\right\} =\emptyset$.
Hence, for any $\lambda \in \sigma(G)$, we have $\lambda^2 \notin \left\{ -\left(n-\frac{1}{2}\right)^2\pi^2\   \Big{|} \ n\in \mathbb{N}\right\}$,
which implies that  $\lambda \notin \left\{ i\left(n-\frac{1}{2}\right)\pi \   \Big{|} \ n\in \mathbb{Z}\right\}$ and hence $\cosh \lambda \neq 0$.
Consequently,   $\cosh G$ is invertible.

By a simple computation, the operator $S$, given by \dref{wxh2020311818} and \dref{wxh2020311806}, solves  the Sylvester equation \dref{20206101055} and moreover, $C_1S$  given by \dref{wxh2020311819} satisfies $C_1S\in \mathcal{L}(L^2[0,1],\mathbb{R})$.
Define $\A_{F_0} =\begin{pmatrix}
 {A}_1+F_0C_{1  }&B_1C_{2\Lambda }\\0& \tilde{A}_2
\end{pmatrix}$  and
$  \mathcal{C}_1 =( C_1,  0) \in \mathcal{L}( \R^{ m}\times L^2[0,1],\R) $. As
 the proof of Corollary \ref{Co2020425849},  it follows from
 \cite[Remark 6.5]{Weiss1994MCSS} that
  $(\A_{F_0}, \mathcal{C}_1)$  is approximately observable. By  Lemma   \ref{THwxh20202291137},  the pair  $(\mathbb{S}\A_{F_0} \mathbb{S}^{-1},\mathcal{C}_1\mathbb{S}^{-1})=\left( \begin{pmatrix}
{A}_1+F_0C_{1  }&0\\0& \tilde{A}_2
\end{pmatrix}, (C_1,-C_1S)\right)$  is approximately observable as well
where the invertible  transformation $\mathbb{S}$   is given by
   \begin{equation} \label{20208131626}
   \left.\begin{array}{l}
\disp    \mathbb{S}  \left(  v ,  f \right)^{\top}= \left( v +S f,\ f  \right)^{\top},\ \ \forall\ (  v ,f)^\top\in  \R^m\times L^2[0,1].
 \end{array}\right.
\end{equation}
Thanks to   the block-diagonal structure of $\mathbb{S}\A_{F_0} \mathbb{S}^{-1}$ and
\cite[Lemma 10.2]{FPart1},      system     $(A_2,  C_1S )$ is  approximately  observable.
 Since   $\{\phi_n(\cdot)\}_{n=1}^{\infty}$ defined  by \dref{wxh2020321044}  forms an orthonormal   basis for  $L^2[0,1]$,   we conclude  from \dref{wxh2020311819} and the  approximate observability of $(A_2, C_{1}S)$ that
\begin{equation}\label{202061118}\left.\begin{array}{l}
 \gamma_n =\displaystyle \int_{0}^{1}C_1s(x)\phi_n(x)dx\neq0,\ \ n=1,2,\cdots.
\end{array}\right.\end{equation}
As a result, the finite-dimensional system $(\Lambda_N,\Gamma_N)$  is observable,
  where  $\Lambda_N  $ and $  \Gamma_N $ are given  by \dref{wxh2020321800}.
Hence, there exists an  $L_N=[l_1,l_2,\cdots,l_N]^\top\in \mathbb{R}^{N} $ such that
 $\Lambda_N+L_N\Gamma_N$ is  Hurwitz.
By Lemma \ref{wxh2020321655}, $F_2\in \mathcal{L}(\mathbb{R},L^2[0,1])$
 given by \dref{2020611806}   detects  system $(A_2, C_{1}S)$ exponentially.
Noting that $F_1=F_0+SF_2$  satisfies \dref{wxh2020321018},   the    well-posedness of the
 observer \dref{wxh20202272158}   can be obtained by Theorem \ref{wxhTh20202211035} directly.
\end{proof}
\section{Numerical simulations} \label{simulation}
In this section, we carry out some numerical simulations for systems \dref{20206101017} and \dref{wxh20202272158} to validate the theoretical  results.
 The finite difference scheme is adopted in discretization.   The numerical results
are programmed in Matlab.
The time step   and the space step  are
taken as $4\times 10^{-5}$ and $ 10^{-2}$, respectively.
We choose
 \begin{equation}\label{wxh202082739}
A_1 = \begin{pmatrix}
 0 & -1\\
 1 & 0
\end{pmatrix},\ B_1=\begin{pmatrix}
  1\\
1
\end{pmatrix},\ C_1=(1,\ 1), \ \mu=4.
\end{equation}
With this setting, it is easy to check that the assumptions in Theorem \ref{wxhTh202032926}  are fulfilled with $N=1$.
 The initial states  of  systems \dref{20206101017} and \dref{wxh20202272158}   are chosen as
\begin{equation}\label{wxh2020824737}
  v(0)=(1,1)^\top, \  w(x,0)=\sin \pi x ,\  \hat{w}(x,0)\equiv0,
  \ \hat{v}(0)=0,\ \ x\in[0,1].
 \end{equation}
We assign  the poles  to get  the gains
\begin{equation}\label{wxh2020824739}
F_0=\begin{pmatrix}
  -1\\
-1
\end{pmatrix},\
 F_1=\begin{pmatrix}
   -1.5847\\
   -3.9479
\end{pmatrix},\
F_2=5.0978\times\sqrt{2}\sin \frac{\pi}{2}x,
\end{equation}
which  lead  to $\sigma(\Lambda_N+L_N\Gamma_N )=\{-2\}$.
The  error between the state  $w(\cdot,t)$ and $\hat{w}(\cdot,t)$  is   plotted
in Figure~\ref{Fig1}(a) and the  error  between  the  state   $v(t)=(v_1(t),v_2(t))^\top$ and $\hat{v}(t)=(\hat{v}_1(t),\hat{v}_2(t))^\top$   is  plotted in Figure~\ref{Fig1}(b).
Both of them   show that the convergence   is   effective  and smooth,
which validates numerically the effectiveness of the proposed method.


\begin{figure}[!htb]\centering
\subfigure[$w(x,t)-\hat{w}(x,t)$]
 {\includegraphics[width=0.48\textwidth]{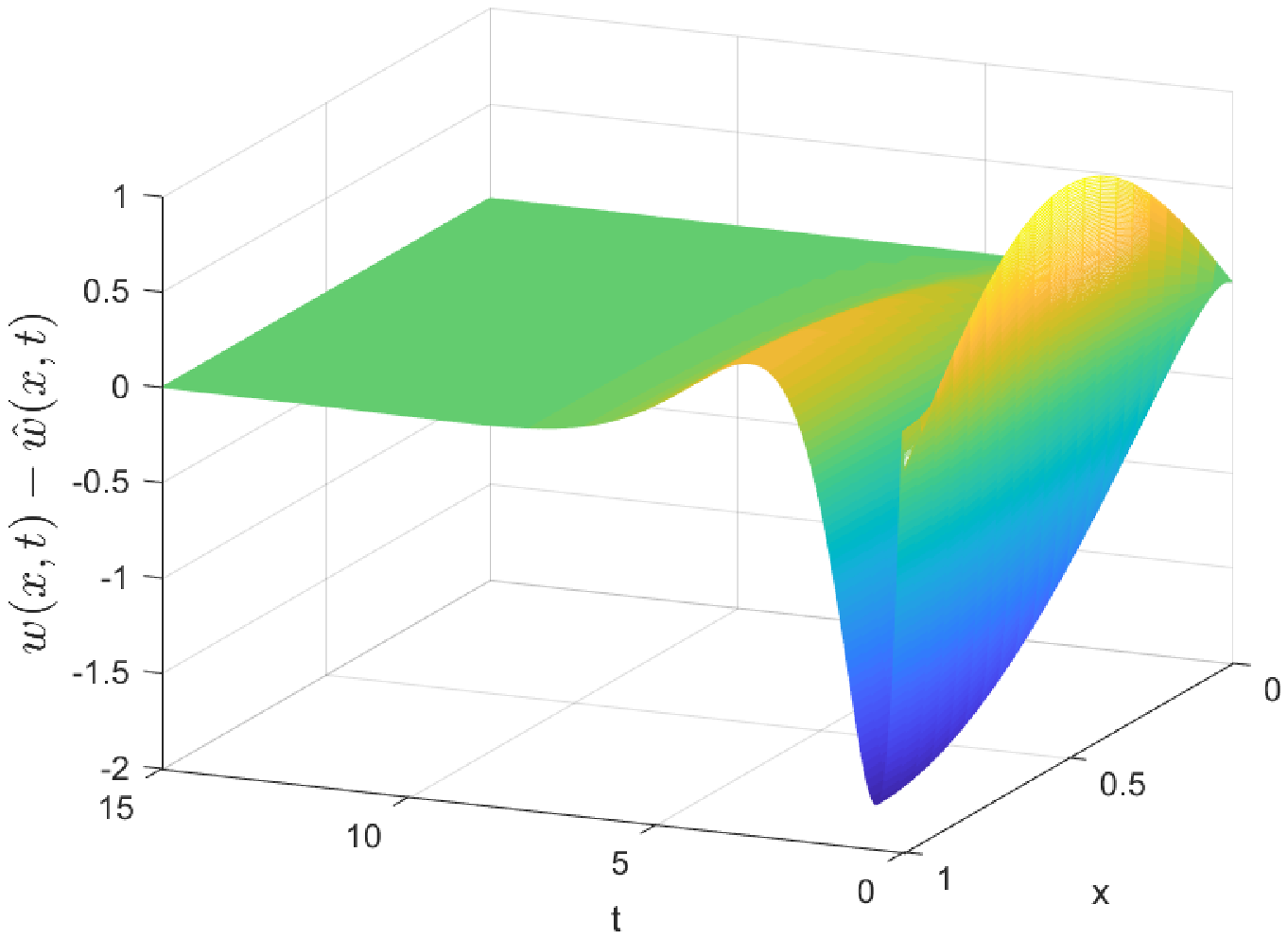}}
 \subfigure[ $v(t)-\hat{v}(t)$]
 {\includegraphics[width=0.48\textwidth]{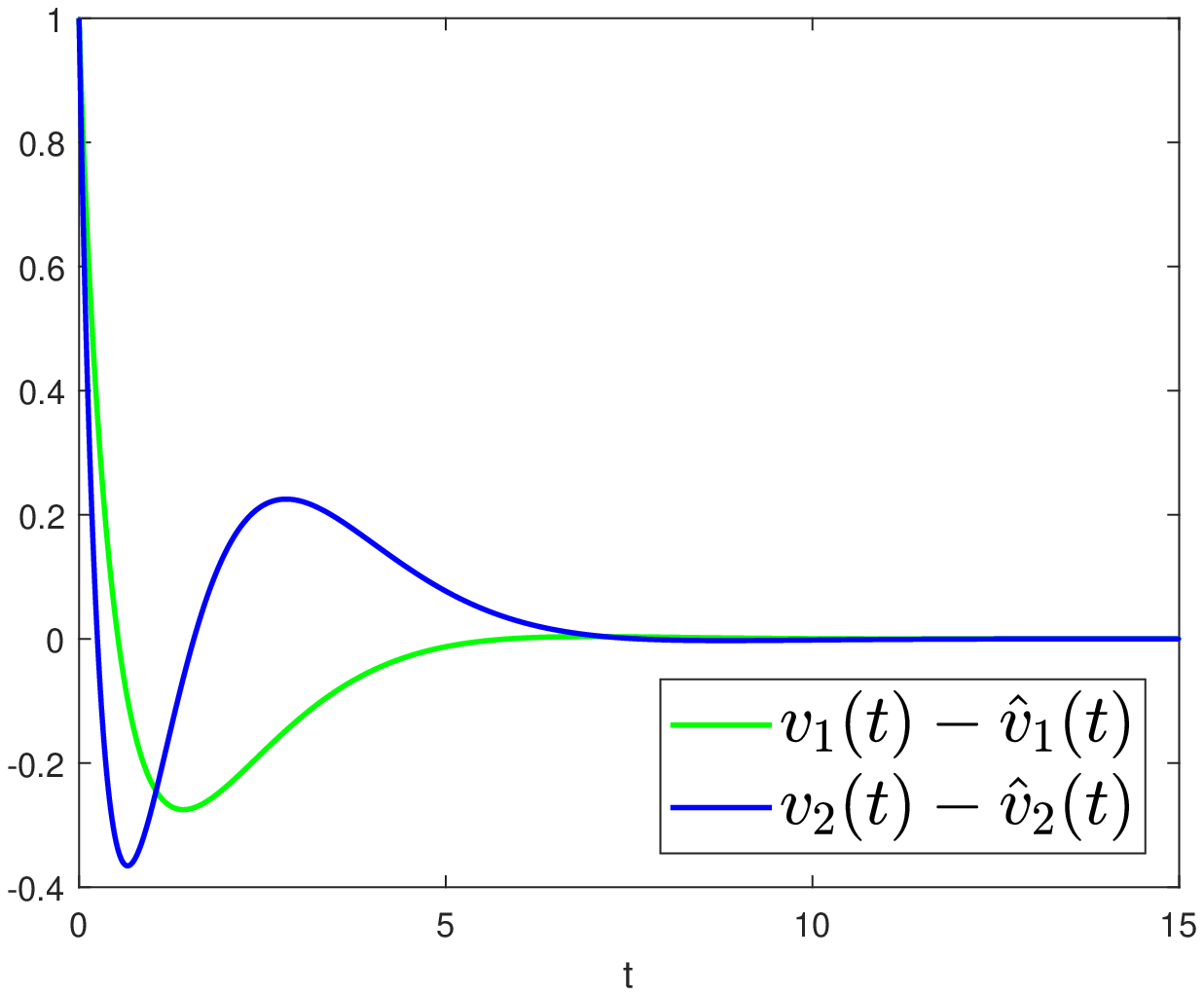}}
\caption{Observation error.}\label{Fig1}
\end{figure}
\section{Conclusions}\label{remarksensor}
This paper attributes the  observer design for abstract cascade   systems.
Two important   issues are addressed.
  The first one is the sensor  dynamics compensation for a linear system
  and the second one is on the error based observer   design in the problem of output regulation.
It is found that these two    different problems can be dealt with in a united way from the abstract linear cascade systems  framework  point of view.
  As a  result, the observer design approaches in sensor dynamics compensation  and the disturbance estimation approaches  in  output regulation can be   used interactively.
We propose a new scheme for   the observer design by seeking    tuning gain operators   of the  Luenberger-like observer.
  Both the  well-posedness and the exponential convergence of the observer    are established. The proposed method gives an alternative method for  observer design of the PDE  cascade systems, which avoids the target system seeking and the  Lyapunov function constructing
   in the popular PDE backstepping method.

It should be pointed out that the proposed approach in  Theorem  \ref{wxhTh20202211035} is not limited
to the examples considered in Sections \ref{ODE+delay} and \ref{heat+ODE}.
Actually,  the approach opens up a new road leading to the observer design of cascade systems particularly for
those systems which consist of ODE  and multi-dimensional PDE.
More importantly,
 it can also  be  applied to  the   delay  compensation for general infinite-dimensional systems. This will be considered  in
  the    third paper  \cite{FPartDelay}  of this series works.
 Moreover,
   as pointed in Remark \ref{Re2020692101},  the approach  gives rise to a new idea of  active disturbance rejection control.
In the last  paper \cite{FPart4} of this series works, we will give a new observer, i.e., {\it extended dynamics observer}, in the  framework of  the active disturbance rejection control.

\section{Appendix A}
\begin{lemma}\label{wxh2020321655}
Let the operator $A_2$ be given by \dref{20206101041},      let  $\phi_n(\cdot)$ and  $\lambda_n$ be given by \dref{wxh2020321044} and let   $N$ be  a positive integer such  that \dref{wxh2020321759} holds with $\mu>0$.
Suppose that there exists an  $L_N=[l_1,l_2,\cdots,l_N]^\top\in \mathbb{R}^{N} $ such that
 $\Lambda_N+L_N\Gamma_N$ is  Hurwitz, where $\Lambda_N  $ and $  \Gamma_N $ are given  by \dref{wxh2020321800}.
Suppose that   $F_2: \R\to L^2[0,1]$  is  defined by
\dref{2020611806}    and $C_1S$    is given by \dref{wxh2020311819}.
 Then,  the  operator $A_2+F_2C_1S$ generates an exponentially stable $C_0$-semigroup  $e^{(A_2+F_2C_1S) t}$ on $L^2[0,1]$.
\end{lemma}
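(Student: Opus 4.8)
The plan is to reduce the claim to a finite-dimensional pole-placement fact combined with the self-adjoint spectral theory of $A_2$, exploiting that the feedback $F_2C_1S$ is a \emph{bounded finite-rank} operator. First I would record, from \dref{2020611806} and \dref{wxh2020311819}, that for every $f\in L^2[0,1]$
\begin{equation*}
F_2C_1Sf=\langle f,C_1s\rangle_{L^2[0,1]}\sum_{n=1}^{N}l_n\phi_n(\cdot),
\end{equation*}
so $F_2C_1S\in\mathcal{L}(L^2[0,1])$ has rank at most $N$ and range contained in $X_N:=\mathrm{span}\{\phi_1,\dots,\phi_N\}$. Since $A_2$ generates a $C_0$-semigroup, the bounded perturbation theorem immediately yields that $A_2+F_2C_1S$ generates a $C_0$-semigroup on $L^2[0,1]$ (this is the operator appearing in \dref{wxh20203122292}); only exponential stability remains.

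Next I would use the orthogonal splitting $L^2[0,1]=X_N\oplus X_N^{\perp}$, with $X_N^{\perp}=\overline{\mathrm{span}}\{\phi_n\}_{n>N}$ and projections $P_N$, $Q_N=I-P_N$. Because $A_2$ is self-adjoint and diagonalized by $\{\phi_n\}$, with $A_2\phi_n=(-\lambda_n+\mu)\phi_n$, both subspaces reduce $A_2$; and since $\mathrm{Ran}(F_2C_1S)\subset X_N$ we have $Q_N(A_2+F_2C_1S)=A_2Q_N$, so the perturbed operator is block-triangular for this splitting and $X_N^{\perp}$ is invariant with $(A_2+F_2C_1S)|_{X_N^{\perp}}=A_2|_{X_N^{\perp}}$. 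Writing $f=f_L+f_H$ with $f_L=P_Nf$ identified with its coordinate vector in $\mathbb{R}^N$, the dynamics decouple into the cascade
\begin{equation*}
\dot f_H=\big(A_2|_{X_N^{\perp}}\big)f_H,\qquad \dot f_L=(\Lambda_N+L_N\Gamma_N)f_L+L_N\langle f_H,C_1s\rangle_{L^2[0,1]},
\end{equation*}
with $\Lambda_N,\Gamma_N$ as in \dref{wxh2020321800}; this is precisely the modal identity already carried out in \dref{wxh201912305}, now read off the splitting.

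I would then argue stability block by block. On $X_N^{\perp}$ the restriction $A_2|_{X_N^{\perp}}$ is self-adjoint with spectrum $\{-\lambda_n+\mu:n>N\}\subset(-\infty,\,-\lambda_{N+1}+\mu]$, and $-\lambda_{N+1}+\mu<0$ by \dref{wxh2020321759}; for a self-adjoint (hence normal) generator the growth bound equals the spectral bound, so $e^{A_2|_{X_N^{\perp}}t}$ decays at rate $-\lambda_{N+1}+\mu$. The matrix $\Lambda_N+L_N\Gamma_N$ is Hurwitz by hypothesis, and the coupling $f_H\mapsto L_N\langle f_H,C_1s\rangle$ is bounded from $X_N^{\perp}$ to $\mathbb{R}^N$; hence the $f_L$-equation is a Hurwitz linear system driven by an exponentially decaying input, and the variation-of-constants formula gives exponential decay of $f_L$. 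Combining the two rates yields exponential stability of $e^{(A_2+F_2C_1S)t}$ on $L^2[0,1]$. Alternatively, one may invoke \cite[Lemma 3.3]{FPart1} verbatim on the triangular operator matrix, its two diagonal blocks being exponentially stable and the off-diagonal coupling bounded, hence trivially admissible.

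The only genuinely delicate point is the passage from the formal modal calculation to a statement about the semigroup on the full space: one must verify that the splitting is compatible with $D(A_2)$ and, crucially, that the finite-rank feedback never couples $f_L$ back into $X_N^{\perp}$, so that the system is a true triangular cascade rather than a fully coupled one. Given the self-adjointness of $A_2$ and $\mathrm{Ran}(F_2C_1S)\subset X_N$, both facts are immediate, so I expect no real obstruction; the estimate for the driven Hurwitz block is the routine part.
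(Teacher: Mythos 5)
Your proposal is correct, and it takes a genuinely different route from the paper's own proof. The paper argues in the frequency domain: since $F_2C_1S$ is bounded and $A_2$ generates an analytic semigroup, $A_2+F_2C_1S$ generates an analytic semigroup by bounded perturbation (Pazy), so exponential stability reduces to locating the spectrum; the paper then solves the characteristic equation $(A_2+F_2C_1S)f=\lambda f$ in two cases, getting ${\rm Det}(\lambda-\Lambda_N-L_N\Gamma_N)=0$ when $f\in{\rm Span}\{\phi_1,\cdots,\phi_N\}$, and $\lambda=-\lambda_{j_0}+\mu<0$ when $f$ has a nonzero component along some $\phi_{j_0}$ with $j_0>N$ --- the latter step resting on exactly the structural fact you isolate, namely ${\rm Ran}(F_2C_1S)\subset X_N$, which makes the feedback invisible to high modes. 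You exploit that same fact in the time domain instead: the orthogonal splitting $L^2[0,1]=X_N\oplus X_N^{\perp}$ reduces $A_2$ and renders $A_2+F_2C_1S$ genuinely block-triangular, after which stability follows from the self-adjoint tail (growth bound $=$ spectral bound, rate $-\lambda_{N+1}+\mu$) plus a driven-Hurwitz variation-of-constants estimate on the finite block. Comparing what each buys: the paper's spectral computation is shorter, but it silently relies on two facts it never states --- that analytic semigroups satisfy the spectrum-determined growth condition, and that $\sigma(A_2+F_2C_1S)$ consists entirely of eigenvalues (true here because $A_2$, hence its bounded perturbation, has compact resolvent). Your cascade argument needs neither analyticity nor any spectral-mapping input, only that the splitting reduces $A_2$ and that the tail semigroup is exponentially stable, so it is more elementary, more robust (it would survive a merely normal or sectorial $A_2$ with stable tail), and it delivers an explicit decay rate close to $\min\bigl\{\lambda_{N+1}-\mu,\ -\max{\rm Re}\,\sigma(\Lambda_N+L_N\Gamma_N)\bigr\}$. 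Your fallback of invoking \cite[Lemma 3.3]{FPart1} on the triangular operator matrix is also legitimate and is in fact the same tool the paper uses for the triangular operator $\mathscr{A}_\mathbb{S}$ in the proof of Theorem \ref{wxhTh20202211035}, so either variant of your argument closes the proof.
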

\begin{proof}
By \dref{2020611806} and \dref{wxh2020311819}, the operator $F_2C_1S$ is bounded on $L^2[0,1]$.
Since $A_2 $ generates   an analytic semigroup on $L^2[0,1]$,  so is for the operator  $A_2+F_2C_1S$  due to the boundedness of $F_2C_1S$ (\cite[Corollary 2.3, p.81]{Pazy1983Book}).
The proof will be accomplished if we can show that   $\sigma(A_2+F_2C_1S)\subset \{ s\ | \ {\rm Re}s<0\}$.
For any $\lambda\in \sigma(A_2+F_2C_1S)$, we consider characteristic equation
$(A_2+F_2C_1S)f=\lambda f$ with $f\neq 0$.

When $f\in  {\rm Span}\{\phi_1,\phi_2,\cdots,\phi_N\}$,
assume that $f=\sum_{j=1}^Nf_j\phi_j$.
The characteristic equation becomes
\begin{equation}\label{wxh2020322242}
 \sum_{j=1}^Nf_jA_2\phi_j+\sum_{j=1}^N f_j F_2C_1S \phi_j= \sum_{j=1}^N\lambda f_j\phi_j.
\end{equation}
By \dref{2020611806} and \dref{wxh201912304}, we have
 \begin{equation}\label{20208131714}
F_2C_1S \phi_j= F_2  \int_{0}^{1}C_1s(x)\phi_j(x) dx
  =F_2 \gamma_j=\gamma_j \sum\limits_{n=1}^{N}l_n \phi_n,\ \ j=1,2,\cdots,N.
\end{equation}
 Since $A_2\phi_j=(-\lambda_j+\mu)\phi_j$,  equation \dref{wxh2020322242} takes the form
\begin{equation}\label{201912312141}
 \sum_{j=1}^Nf_j(-\lambda_j+\mu)\phi_j+ \sum_{j=1}^N f_j\gamma_j  \cdot\sum\limits_{n=1}^{N}l_n \phi_n= \sum_{j=1}^N\lambda f_j\phi_j.
\end{equation}
Take the  inner product with $\phi_k$, $k=1,2,\cdots,N$ on equation \dref{201912312141} to get
\begin{equation}\label{201912312146}
(-\lambda_k+\mu)f_k+l_k\sum_{j=1}^N f_j\gamma_j=  \lambda f_k,\ \ k=1,2,\cdots,N,
\end{equation}
which, together with \dref{wxh2020321800}, leads to
\begin{equation}\label{201912312148}
 (\lambda - \Lambda_N-L_N\Gamma_N) \begin{pmatrix}
                       f_1\\f_2\\\vdots\\f_N
                     \end{pmatrix} =0.
\end{equation}
Since $(f_1,f_2,\cdots,f_N)\neq 0$, we have
\begin{equation}\label{201912312149}
{\rm Det}(\lambda - \Lambda_N-L_N\Gamma_N) =0,
\end{equation}
which shows that  $\lambda\in \sigma( \Lambda_N+L_N\Gamma_N) \subset\{ s\ | \ {\rm Re}s<0\}$, since    $  \Lambda_N+L_N\Gamma_N$ is Hurwitz.

When $f\notin {\rm Span}\{\phi_1,\phi_2,\cdots,\phi_N\}$,
there must exist $j_0>N$ so that $\displaystyle \int_{0}^{1}f(x)\phi_{j_0}(x)dx\neq 0$.
Take the  inner product with $\phi_{j_0}$ on equation $(A_2+F_2C_1S)f=\lambda f$ to get
 \begin{equation}\label{wxh2020322257}
(-\lambda_{j_0}+\mu) \int_{0}^{1}f(x)\phi_{j_0}(x)dx=  \lambda \int_{0}^{1}f(x)\phi_{j_0}(x)dx.
\end{equation}
As a result,   $ \lambda =-\lambda_{j_0}+\mu<0$.
This shows also  $\lambda\in \sigma (A_2+F_2C_1S)\subset \{ s\ | \ {\rm Re}s<0\}$.
The proof is complete.
\end{proof}
\end{document}